\newif\iflncs
\newcommand{\nop}[1]{}
\newcommand{\N}{\mathbb{N}}
\newcommand{\DelayP}{\mathsf{DelayP}}
\newcommand{\DelayFPT}{\mathsf{DelayFPT}}
\newcommand{\fpt}{\mathsf{FPT}}
\newcommand{\FPT}{\mathsf{FPT}}
\newcommand{\fptt}{\mathsf{FPT}\text{-time}}
\newcommand{\fptd}{\mathsf{FPT}\text{-delay}}
\newcommand{\ptime}{\mathsf{P}}
\newcommand{\fvst}{\text{\sc Feedback Vertex Set in Tournaments}\xspace}
\newcommand{\gfvst}{\text{\sc GFVST}\xspace}
\newcommand{\vc}{\text{\sc Vertex Cover}\xspace}
\newcommand{\clst}{\text{\sc Closest String}\xspace}
\newcommand{\clstp}{\text{\sc Closest String with Prefix}\xspace}
\newcommand{\lpth}{\text{\sc Longest Path}\xspace}
\newcommand{\ilplong}{\text{\sc Integer Linear Programming with k Variables}\xspace}
\newtheorem{defi}[definition]{Definition}%
\newtheorem{prop}[proposition]{Proposition}
\begin{document}
\title{From FPT Decision to FPT Enumeration} 
\author{Nadia Creignou
\inst{1} 
\and
Timo Camillo Merkl
\inst{2}
\and
Reinhard Pichler
\inst{2} %
\and
Daniel Unterberger
\inst{2} %
}

\authorrunning{N. Creignou et al.}

\institute{Aix Marseille Univ, CNRS, LIS, Marseille, France \\
\email{nadia.creignou@univ-amu.fr}
\and TU Wien, Vienna, Austria \\
\email{\{timo.merkl|reinhard.pichler|daniel.unterberger\}@tuwien.ac.at}}
\maketitle              %
\begin{abstract}
Fixed-parameter tractable (FPT) algorithms have been successfully applied 
to many intractable problems -- with a focus on decision and optimization problems.
Their aim is to confine the exponential explosion to some parameter,
while the time complexity only depends polynomially on the instance size.
In contrast, intractable enumeration problems 
have received comparatively little attention so far. 
The goal of this work is to study how FPT decision algorithms could be turned into FPT enumeration algorithms. We thus inspect several fundamental approaches for designing FPT decision or optimization algorithms and we present ideas how they can be 
extended to FPT enumeration algorithms. 
\end{abstract}
\section{Introduction}

Many practically relevant computational problems
in a wide range of applications 
are intractable.
A meanwhile well-established 
approach of dealing with intractable problems is 
to aim at fixed-parameter tractable ($\fpt$) algorithms 
(see \cite{DBLP:books/sp/CyganFKLMPPS15,DBLP:books/ox/Niedermeier06} for comprehensive collections of 
$\fpt$-results). That is, the exponential explosion can be 
confined to some parameter (typically some structural property of the problem instances 
or of the desired solution) while the time complexity only depends polynomially on 
the size of the problem instance.

$\fpt$-algorithms have been mainly developed for {\em decision problems} (e.g., does there exist a solution?) or {\em optimization problems} (e.g., what is the max/min value attainable by all solutions?). However, in many settings, the user might be interested in {\em all} (or at least a significant portion) of the solutions or optimal solutions to a problem. This is particularly the case when querying data or knowledge base systems, where the user is usually interested in retrieving all (or at least a significant portion) of the answers
rather than just a simple ``yes''. We are thus faced with an {\em enumeration problem}, whose task is to output the entire solution set without duplicates 
(i.e., outputting the same solution twice).
Note that even
seemingly simple problems may have a huge (potentially  exponential) number of solutions.
Hence, as was shown in the foundational paper by 
Johnson et al.~\cite{DBLP:journals/ipl/JohnsonP88}, even the 
notion of tractability has to be redefined. The enumeration complexity class that 
is often considered as an analog of the decision complexity class $\ptime$ is $\DelayP$, 
i.e., ``polynomial delay''. It 
consists of those problems which have an algorithm that requires polynomial time 
in the size of the instance to output the first solution, to output any further solution, 
and to terminate after the last solution.
The study of enumeration problems and the development of $\DelayP$-algorithms has received a lot 
of attention, 
especially in the areas of graph theory and databases, see 
e.g., \cite{DBLP:conf/sigmod/BerlowitzCK15,%
ConteGMR2018,%
DBLP:conf/stoc/ConteU19,%
DBLP:journals/siamdm/KanteLMN14,%
DBLP:conf/pods/LivshitsK17}.

But what happens if the task of finding the first or a further solution is intractable? 
A natural approach
is to borrow ideas from decision and optimization problems and to aim for
$\fptd$, i.e., algorithms that find the first or next solution 
in $\fpt$-time. Surprisingly, this line of research has received little attention so far;  
notable exceptions are 
\cite{DBLP:journals/algorithms/CreignouKMMOV19,%
DBLP:journals/corr/CreignouMMSV13,%
damaschke06,fernau02,%
DBLP:journals/jcss/GolovachKKL22,%
home-made:Meier20}. 
The aim of our work is to shed some light on this under-researched area and to 
provide assistance to algorithm developers in designing $\fptd$ enumeration algorithms.
To this end, we revisit the most common techniques in 
the design of $\fpt$ decision and optimization algorithms 
and look for their extension to 
$\fptd$ enumeration algorithms. 
For the basic $\fpt$ approaches considered here, we will mainly follow the 
presentation in~\cite{DBLP:books/sp/CyganFKLMPPS15}. Apart from kernelization
\cite{DBLP:journals/corr/CreignouMMSV13,DBLP:journals/jcss/GolovachKKL22,DBLP:conf/iwpec/BougeretGSS25}
and tree decompositions (in particular, enumeration algorithms via
Courcelle's Theorem~\cite{DBLP:journals/tcs/CourcelleM93}), these techniques have not yet made their way into enumeration.
Another $\fpt$ approach, that has already been extended to enumeration~\cite{DBLP:journals/corr/CreignouMMSV13} 
but is not mentioned in~\cite{DBLP:books/sp/CyganFKLMPPS15}, are backdoors. In particular, for various variants of the SAT-problem,
they have received quite some attention~\cite{DBLP:journals/jcss/DreierOS24,%
DBLP:conf/focs/GaspersS13,DBLP:conf/sat/MisraORS13}.

Among the enumeration methods that we apply are adaptations and extensions 
of typical enumeration approaches 
\cite{DBLP:phd/it/Marino12}
such as ``flashlight algorithms'' (which search for solutions in a top-down traversal of 
some tree structure and repeatedly use a 
decision algorithm  as a ``flashlight''
to see if a certain branch
contains solutions before expanding it) or exploiting the closure
under union of $\DelayP$ \cite{home-made:Strozecki10}.
A particular
challenge of enumeration problems comes from the requirement that duplicates 
have to be avoided. Here, we strongly draw inspiration from the landmark paper 
of Lawler~\cite{Lawler72}, where a basic technique for avoiding duplicates by 
appropriate partitioning of the solution space was presented.

The paper is organized as follows: in Section \ref{sec:preliminaries}, we recall some 
fundamental definitions of parameterized enumeration. The main part of the paper is 
devoted to the design of $\DelayFPT$ algorithms by carrying over basic $\fpt$-methods
from decision algorithms, namely partition algorithms as an 
extension of bounded search trees (in Section \ref{sec:partition}),
enumeration algorithms for a union of (not disjoint!) ``$\fpt$-many'' subsets of solutions
(in Section~\ref{sec:UnionEnumeration}), iterative compression-based enumeration 
algorithms (in Section~\ref{sec:IterativeCompression}) and dynamic programming algorithms 
(in Section~\ref{sec:DynamicProgramming}). The design of enumeration 
algorithms via these $\fpt$-methods is illustrated by instructive examples, and
we propose formalizations to generalize these ideas.
We conclude with Section~\ref{sec:Conclusion}.
Further details 
are  provided in
\iflncs
\cite{DBLP:journals/corr/abs-2509-11929}.
\else
the appendix.
\fi

\section{Parameterized Complexity and Enumeration}\label{sec:preliminaries}

We assume familiarity with basic notions of complexity and graph theory. 
A \emph{parameterized problem} is defined by a ternary predicate $Q \subset \Sigma^* \times \N \times \Sigma^*$ for some alphabet $\Sigma$. A tuple $(I,k) \in \Sigma^* \times \N$ is  an instance, 
$k$ is its parameter, and $Q(I,k) \coloneqq \{S \in \Sigma^* \mid (I,k,S) \in Q\}$ 
is its solution set.
The task of deciding whether $Q(I,k) \neq \emptyset$ 
is the {\em decision problem $Q$}, while the task of enumerating $Q(I,k)$ without duplicates
is the {\em enumeration problem $Q$}. 
An instance $(I,k) \in \Sigma^*\times \N$ with $Q(I,k) \neq \emptyset$ is a \textit{yes-instance}, otherwise a \textit{no-instance}. We will denote the size of an instance $(I,k)$ by $|(I,k)|$.

\begin{defi}
    Let $Q$ be a decision problem. We say $Q$ can be decided in $\fptt$ if there exists an algorithm $\mathcal{A}$ that, for any instance
    $(I,k) \in \Sigma^* \times \N$, decides if $Q(I,k)\neq \emptyset$ with a runtime bounded by $h(k)p(|(I,k)|)$ for some computable function $h \colon \N \to \N$ and polynomial~$p$. We call the class of all problems admitting $\fptt$ algorithms $\fpt$.
\end{defi}

An \emph{enumeration algorithm} $\mathcal{A}$ for an enumeration problem $Q$ is an
algorithm which, when given an instance $(I,k)$  of $Q$ as input, 
outputs the elements of $Q(I,k)$
without duplicates.
An enumeration problem is in the class $\DelayP$, if it has an algorithm with a polynomial bound 
on its \emph{delay}. 
By ``delay'' we mean the maximum time between  two consecutive solutions, including  
the time until the first solution is output and  the time after the last solution is output.
The natural parameterized analog is $\DelayFPT$, which we define next.

\begin{defi}
    Let $Q$ be an enumeration problem. We say $Q$ can be enumerated with \emph{$\fptd$} if there exists an algorithm $\mathcal{A}$ which, for any instance $(I,k) \in \Sigma^* \times \N$, enumerates all solutions in $Q(I,k)$ with a delay bounded by $h(k)p(|(I,k)|)$ for some computable function $h:\N \to \N$ and polynomial $p$. We call the class of all problems $Q$ admitting such an $\fptd$ enumeration  algorithm $\DelayFPT$.
\end{defi}

\section{Partition Algorithms}\label{sec:partition}

Search tree algorithms, which work by iteratively splitting a given problem instance into sub-instances, 
are among the most commonly used techniques in $\fpt$ algorithms 
for decision problems. The correctness of such algorithms hinges on the fact that, in case of a yes-instance,
at least one of the sub-instances resulting from a splitting step leads to a feasible solution. 
In case of enumeration, we have to be careful that 
the splitting step {\em partitions} the set of solutions, so as to preserve all solutions 
and avoid duplicates. 
We therefore start our discussion of $\DelayFPT$ enumeration algorithms by studying various forms of \emph{partition algorithms}.

\subsection{Bounded Search Tree Partition Algorithms}
\label{subsec:boundedSearchTree}

We know from decision algorithms
that, 
if the total size of the search tree is bounded by a function of the parameter only and every step takes $\fptt$, 
then such an algorithm runs in $\fptt$. 
To carry these ideas over to enumeration algorithms, several extensions are required. 
On one hand, we still need bounds (that only depend on the parameter $k$) on the breadth and depth of the search tree.
In addition, following the philosophy of \cite{Lawler72}, we now have to make sure that, with every splitting step, (i) {\em every} solution is 
contained in some branch, (ii) {\em no} solution is contained in {\em more than one branch}, and (iii) the solutions of the 
{\em instances at the leaf nodes} of the search tree can be enumerated with $\DelayFPT$.

We illustrate these ideas on the 
\text{\sc Feedback Vertex Set in Tournaments} 
(FVST, for short)
problem.
For a directed  graph $G=(V,A)$, we call $S \subseteq V$ a \emph{feedback vertex set}, if $G[V\setminus S]$ (that is the  subgraph of $G$ induced by the vertex set $V\setminus S$) is acyclic. A \emph{Tournament} $T=(V,A)$ is a directed graph 
such that, for any $u$ and $v$ in $V$, exactly one of $(u,v)$ and $(v,u)$ is in $A$.
In the FVST enumeration problem, we are given a tournament $T = (V,A)$
and a parameter $k\in \N$, and we want to enumerate all feedback vertex sets of size at most $k$. 

The FVST decision problem can be shown to be in $\fpt$ via bounded search trees
(see \cite[Section 4.2]{DBLP:books/sp/CyganFKLMPPS15}). Below we show how we can get 
an enumeration 
algorithms satisfying the above requirements (i) -- (iii) 
by maintaining additional information.

\begin{proposition}\label{prop:fvst}
The $\fvst$ problem is in $\DelayFPT$ using polynomial space.
\end{proposition}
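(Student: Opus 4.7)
The plan is to refine the standard bounded-search-tree decision algorithm for $\fvst$ (see \cite{DBLP:books/sp/CyganFKLMPPS15}) into a Lawler-style partition algorithm satisfying conditions (i)--(iii). The key combinatorial fact is that a tournament is acyclic iff it contains no directed triangle, so the decision algorithm branches on such a triangle and asks which of its three vertices to delete. For enumeration, I would attach to each search-tree node a pair $(D, F)$ of disjoint vertex sets: $D$ collects vertices that must lie in the output feedback vertex set and $F$ collects vertices forbidden from it. The task at a node $(D, F)$ is to enumerate all $S$ with $D \subseteq S \subseteq V \setminus F$, $|S| \le k$, and $T[V \setminus S]$ acyclic; the updates of $D$ and $F$ will be arranged so that the subtasks of the children partition the parent's solution set.

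The branching rule at $(D, F)$ first prunes if $|D| > k$. Otherwise it searches for a directed triangle $\{a, b, c\}$ in $T[V \setminus D]$. If no such triangle exists, $T[V\setminus D]$ is already acyclic, so every admissible $S$ is a valid feedback vertex set; the node is a leaf, and a standard reverse-lexicographic backtracking routine outputs, with polynomial delay and without duplicates, every $S$ with $D \subseteq S \subseteq V \setminus F$ and $|S| \le k$. Otherwise, let $a_1, \dots, a_m$ with $m \in \{1,2,3\}$ list the triangle vertices not already in $F$, in some fixed order. If $m = 0$ the triangle cannot be killed by any admissible $S$ and we return. Else we create $m$ children, where the $i$-th child is labelled $(D \cup \{a_i\}, \, F \cup \{a_1, \dots, a_{i-1}\})$.

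The partition property is the Lawler trick: any valid $S$ at $(D, F)$ must contain a vertex of $\{a,b,c\}$ (otherwise $T[V \setminus S]$ still contains the triangle), and that vertex cannot lie in $F$; hence $S$ meets $\{a_1, \dots, a_m\}$, and taking the smallest index $i$ with $a_i \in S$ routes $S$ into child $i$ and, thanks to the staggered update of $F$, into no other child. Each recursive call strictly increases $|D|$, so the tree has depth at most $k$ and branching factor at most $3$, hence at most $3^k$ nodes. A directed triangle in a tournament can be found in $O(n^3)$ time, so every node is processed in polynomial time; a depth-first traversal then yields delay bounded by $3^k$ times a polynomial in $|(T, k)|$, which is $\fptd$.

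I expect the main obstacle to lie in the bookkeeping that keeps the partition disjoint across the whole tree: without the staggered $F$-component an FVS containing two triangle vertices would be enumerated in several branches, so the update rule must be tracked carefully through the recursion, and one must check that the invariants $D \cap F = \emptyset$ and $|D| \le k$ are preserved. A secondary point is verifying that the leaf-level enumerator itself runs with polynomial delay and without duplicates; this is routine but should be stated explicitly, since a single leaf may emit many solutions and the overall delay bound rests on it.
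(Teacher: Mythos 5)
Your proof is correct and follows essentially the same approach as the paper: a bounded search tree on directed triangles that carries a set of committed vertices and a set of forbidden vertices so that the children of each node partition the parent's solution set, with a polynomial-delay subset enumerator at the acyclic leaves. The only difference is the branching rule --- you route each solution to the smallest-index admissible triangle vertex it contains (at most $3$ branches), whereas the paper branches on the exact intersection of the solution with the triangle (up to $7$ non-empty subsets); both yield a valid partition and an $\fpt$-bounded search tree.
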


\begin{proof}
Our algorithm for enumerating the feedback vertex sets of a tournament $T$ maintains the problem instance $(T,k)$ together 
with two vertex sets $C, F\subseteq V(T)$ as additional information. The intended meaning of these vertex sets is as follows:
We store in $C$ vertices of $T$ that we definitely want to delete and in $F$ the vertices that we definitely want to retain. 
Hence, our enumeration algorithm recursively searches for 
feedback vertex sets $U$ in  the tournament $T[V\setminus C]$ 
having $U \cap F = \emptyset$. Initially, we set $F = C = \emptyset$.

A tournament has a directed cycle if and only if it has a directed triangle (see \cite[Lemma 4.3]{DBLP:books/sp/CyganFKLMPPS15}). 
We can find such a triangle  (or confirm that none exists) in time $O(|V|^3)$. If none exists, 
then we can output all feedback vertex sets $S$ of size $\leq k$ with $C \subseteq S$ and 
$S \cap F = \emptyset$
with polynomial delay.
Otherwise, since each feedback vertex set must have non-empty intersection with this triangle, we branch 
by adding any non-empty subset of the triangle vertices disjoint from $F$ to $C$ 
($\leq 7$ possibilities), reduce $k$ accordingly, and add 
the remaining triangle vertices to $F$. We recursively apply these steps, i.e., finding a directed triangle and branching over non-empty intersections. As we only need to store the path of the currently examined recursion tree 
(in addition to the children with possibly some arbitrary ordering of them for all nodes on the path) 
and the depth is at most $k$, we  only need polynomial space.

We verify that  the requirements on a bounded search  tree partition algorithm are fulfilled: the breadth of the search tree
is bounded by 7 and the depth is bounded by $k$ (since the parameter is strictly decreased in each sub-instance). 
Clearly, the splitting preserves all solutions and the additional information in the sets $F,C$ excludes duplicates. 
Finally, for a sub-instance with $k = 0$, we know that the solution set is empty if the subgraph of $T$ induced by $V \setminus C$ 
contains a cycle; otherwise we output the solution $S\supseteq C$ as described above.
\end{proof}

\noindent
{\bf Generalization and formalization.}
Intuitively, we need a method $f$ to create child instances of a given instance such that the solution space is partitioned by the solution space of the child instances. We recursively apply $f$ until the instances are fully processed, i.e., until all solutions can be easily enumerated with a method $f_0$. To limit the size of the possible search tree and to  formalize when instances are fully processed, we need a function $\gamma$, which assigns each instance $(I,k)$ a natural number $m$ such that the tree rooted at $(I,k)$ has a height of at most $m$ (where $m$ can, in turn,
only depend on the parameter $k$),
and $m=0$ if and only if the instance is fully processed. The following definition is intended to  capture these requirements.

\begin{defi}\label{def:bounded_tree_partition}
    For an enumeration problem $Q$, we call a tuple of functions $(f,\gamma)$ a \emph{bounded search tree partition algorithm} if, for some computable functions $b,h\colon \mathbb{N} \rightarrow \mathbb{N}$ (for breadth and height, respectively), the following conditions hold: 
    \begin{itemize}
        \item $f \colon \Sigma^* \times \N \to \mathcal{P}(\Sigma^* \times \N)$ is an $\fptt$  function such that for all $(I,k)\in \Sigma^* \times \N $: 
        \begin{itemize}
            \item $|f(I,k)| \leq b(k)$,            
            \item for all $(I',k') \in f(I,k)$, $|(I',k')| \leq |(I,k)|$ and $k' \leq k$,
            \item $Q(I,k) =\biguplus_{(I',k') \in f(I,k)} Q(I',k')$, where $\biguplus$ denotes the disjoint union, 
        \end{itemize}
        \item $\gamma\colon \Sigma^* \times \N \to \N$ is an $\fptt$ function such that for all $(I,k)\in \Sigma^* \times \N$: 
        \begin{itemize}
            \item $\gamma(I,k) \leq h(k)$, %
            \item if $\gamma(I,k) \neq 0$, then for all $(I',k') \in f(I,k)$, $\gamma(I',k') < \gamma(I,k)$.
        \end{itemize}
    \end{itemize}
     Additionally, we require that there must exist a  function $f_0$, which enumerates $Q(I,k)$ with $\fptd$ when $\gamma(I,k) = 0$.
\end{defi}
As we will see in the rest of this section, the condition that both breadth and height are bounded 
by a function of the parameter only, can be quite restrictive.
 
In the definition above, the first condition on $f$ controls the breadth of the search tree,
while the function $\gamma$ is a measure that decreases at each step, 
and thus bounds the depth of the search tree. 
Both are bounded by a function of $k$, which thus also applies to 
the size of the search tree.
The two last conditions on $f$ ensure that traversing the search tree will provide all solutions with no repetitions. Therefore, we get the following result.

\begin{prop}
\label{prop:bounded_search_tree}
    Let $Q$ be a parameterized enumeration problem. If $Q$ has a 
    bounded search tree  partition algorithm $(f,\gamma)$, then $Q$ is in $\DelayFPT$.
\end{prop}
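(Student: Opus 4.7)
The plan is to perform a depth-first traversal of the search tree induced by $f$, rooted at the input instance $(I,k)$, and at every leaf invoke $f_0$ to enumerate that leaf's solutions in order.

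First, I would establish structural bounds on the tree. Because $\gamma$ strictly decreases along every edge and $\gamma(I,k)\le h(k)$, the depth of the search tree is at most $h(k)$; because every node has at most $b(k)$ children, the total number of nodes is bounded by $b(k)^{h(k)+1}$, which is a function of $k$ alone. Moreover, since $f$ and $\gamma$ are $\fptt$-computable and child-instance sizes do not exceed $|(I,k)|$, each node can be processed in time $g(k)\cdot p(|(I,k)|)$ for some computable $g$ and polynomial $p$.

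Correctness then follows by unfolding the partition property inductively along the tree: $Q(I,k)=\bigcup_{(I',k')\in f(I,k)} Q(I',k')$ iterated over the recursion yields that $Q(I,k)$ is the \emph{disjoint} union of $Q(\ell)$ over the leaves $\ell$ of the tree. Hence, calling $f_0$ at every leaf during the DFS outputs every element of $Q(I,k)$ exactly once.

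For the delay, the crucial observation is that the $\fptd$ guarantee for $f_0$ covers the time until the first solution and the time after the last solution. Thus even a leaf with $Q(\ell)=\emptyset$ is disposed of in $\fpt$-time. Between two consecutively emitted solutions, the algorithm either continues an ongoing $f_0$ invocation (bounded by its $\fpt$-delay), or it backtracks through at most $b(k)^{h(k)+1}$ nodes, invoking $f_0$ at each empty leaf it passes; summing the $\fpt$-cost per node gives an overall delay of $h'(k)\cdot q(|(I,k)|)$ for suitable $h',q$, which is precisely $\fptd$. The main point to be careful about is exactly this amortization across possibly many empty leaves; it is handled precisely because the definition requires $f_0$ to be $\fptd$ rather than merely having $\fpt$-delay between consecutive outputs, so the "boundary" times at each leaf stay under control.
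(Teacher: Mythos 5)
Your proof is correct and follows essentially the same route as ours: a depth-first traversal of the search tree induced by $f$, with $f_0$ invoked once $\gamma$ reaches $0$, and a delay bound of (tree size, i.e., $O(b(k)^{h(k)})$) times the $\fpt$ cost per node. Your explicit treatment of empty leaves --- noting that the $\fptd$ guarantee for $f_0$ also bounds the time before the first and after the last solution, so that backtracking through many solution-free leaves stays within $\fpt$-delay --- is a point we leave implicit, and is a welcome clarification.
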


To fit our FVST algorithm from the proof of Proposition \ref{prop:fvst}
 into this general framework, we consider a generalized version of FVST (we will call it GFVST) where
instances are extended by sets $C$ and $F$. Then the FVST problem corresponds to the 
special case of GFVST with $C = F = \emptyset$. Function $f$ defines one splitting step, where an instance of GFVST gives rise to up to $b(k) = 7$ new instances.
The function $\gamma$ is initially set to $k$ and decreased by the number of vertices
added to $C$ in each splitting step. To trigger the application of function $f_0$ for outputting all solutions of a given instance, we set $\gamma$ to 0 if no more directed triangle exists in the graph. Of course, $\gamma$ also becomes 0 when the
``budget'' of allowed vertex deletions has been exhausted.
For further details,  
\iflncs
see~\cite{DBLP:journals/corr/abs-2509-11929}.
\else
see~Appendix~\ref{app:boundedSearchTree}.
\fi

\subsection{Flashlight Partition Algorithms}
\label{subsec:Flashlight}

Next, we aim at relaxing 
the bound on the size
of the search tree.
More specifically, we bound both the breadth and the depth by an $\fpt$-function rather than by a function of the parameter $k$ alone. 
To guarantee that we find a new solution after $\fptt$, we are only allowed to explore a branch in the search tree if it leads to a feasible solution and we have to go through the tree in a depth-first manner. 
This can be guaranteed by using a ``flashlight'' algorithm, i.e., a decision algorithm that decides in $\fptt$ whether an instance has a solution or not. 
Again, we need to ensure that the branching properly partitions the solution space and that solutions of instances at leaf nodes can be enumerated with $\DelayFPT$.

We illustrate the idea of flashlight partition algorithms by considering the enumeration variant of 
the $\clst$ problem, which is defined as follows:  
we are given a set of strings $X = \{x_1, \dots, x_n\}$ of length $L$ over an alphabet $\Sigma$ and a parameter $k \in \N$. The task is to find all strings $s$ of length $L$ such that $d_H(s,x_j) \leq k$ for 
each
$1 \leq j \leq n$, where $d_H(x,y)$ is the Hamming distance between strings $x$ and $y$.
We call such a string $s$ a \emph{center string}.

$\fpt$-membership of the decision variant of $\clst$ (i.e., does there exist a center string?)
can be shown by a bounded search tree algorithm (see \cite[Section 3.5]{DBLP:books/sp/CyganFKLMPPS15}). 
However, for an enumeration algorithm, we again need to maintain more information to make sure that
every splitting of a problem instance {\em partitions} the solution set.

\begin{prop}\label{prop:closet_string}
The $\clst$ problem is in $\DelayFPT$ using $\fpt$ space.    
\end{prop}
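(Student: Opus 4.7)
The plan is a flashlight partition algorithm that builds candidate center strings position by position. The search tree has prefixes $p \in \Sigma^{\le L}$ as nodes; at a node $p$ of length $\ell < L$ we branch on the $|\Sigma|$ possible characters for position $\ell+1$, and at a leaf $p$ of length $L$ we output $p$. This is a genuine partition of $Q(X,k)$, since every center string is the label of exactly one leaf. To obtain $\fptd$ delay we equip the traversal with a flashlight: the decision version of $\clstp$, which asks whether a given prefix $p$ extends to some $s \in \Sigma^L$ with $d_H(s, x_j) \le k$ for all $j$. Before descending into a child we invoke the flashlight and prune whenever the answer is negative.

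The main obstacle is showing that $\clstp$ is in $\fpt$. We adapt the bounded search tree algorithm for $\clst$ from \cite{DBLP:books/sp/CyganFKLMPPS15}. For a given prefix $p$, first verify that $d_H(p, x_j[1..|p|]) \le k$ holds for every $j$, and otherwise answer \emph{no}. Then start from the working candidate $s := p \cdot x_1[|p|+1..L]$. If $d_H(s, x_j) \le k$ for all $j$, then $s$ itself is a valid completion and we answer \emph{yes}. Otherwise fix any $j$ with $d_H(s, x_j) > k$. A hypothetical completion $s^*$ must agree with $p$ on the first $|p|$ positions, so any disagreement between $s$ and $x_j$ inside the prefix is simultaneously a disagreement between $s^*$ and $x_j$; hence $s^*$ is forced to agree with $x_j$ at some suffix position where $s$ and $x_j$ differ. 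The triangle inequality gives $d_H(s, x_j) \le 2k$ (otherwise answer \emph{no}), so we branch on at most $2k$ suffix positions of disagreement, updating $s$ to agree with $x_j$ at the chosen position in each branch. Since $d_H(s, s^*) \le k - d_H(p, x_1[1..|p|]) \le k$ at the top level, and each correct branch decreases $d_H(s, s^*)$ by one, the recursion has depth at most $k$. The search tree has size $(2k)^k$, so $\clstp$ is in $\fpt$.

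Combining the two ingredients, the flashlight partition algorithm has depth $L$ and out-degree at most $|\Sigma|$, both polynomial in the input size. Between two consecutive output leaves the traversal follows at most $2L$ tree edges, and at each node it performs at most $|\Sigma|$ flashlight calls, each running in $\fptt$. The same bound also covers the time before the first and after the last solution, so the overall delay is bounded by $h(k) \cdot q(|(X,k)|)$ for some computable function $h$ and polynomial $q$. Therefore $\clst$ is in $\DelayFPT$.
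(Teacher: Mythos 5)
Your proposal is correct and follows essentially the same route as the paper: a flashlight partition over prefixes, branching on the next character and pruning with an $\fpt$ decision procedure for \clstp. The only difference is that you spell out the adaptation of the bounded search tree algorithm to handle a fixed prefix (restricting the branching to suffix positions of disagreement and using $d_H(s,s^*)$ as the decreasing measure), which the paper merely asserts is easily obtained; your adaptation is sound.
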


\begin{proof}
Our algorithm for enumerating the center strings of a given instance $(X,k)$ 
maintains, in addition to the instance, a string $\omega \in \Sigma^*$.  The idea of $\omega$ is that
we only search for center strings with prefix $\omega$. Initially, we set $\omega$ to the empty string.
As flashlight, we use an $\fpt$-algorithm for the   
\clst decision problem, checking if there exists a center string with prefix $\omega$.
Such an $\fpt$-algorithm is
easily obtained by adapting the bounded search tree algorithm from 
Section 3.5 in \cite{DBLP:books/sp/CyganFKLMPPS15}.

Now let $X$ be an instance of the 
\clst enumeration problem together with a prefix $\omega$ as additional information.
We split this problem instance into sub-instances by  iterating through all possible characters $\sigma\in \Sigma$
and tentatively generating all possible prefixes $\omega\cdot\sigma$.
However, 
from these, we only keep the yes-instances (which can be determined in $\fptt$ 
by the flashlight). 

The depth of the search tree at an instance $(X,k)$ with additional information $\omega$ is therefore $|\omega|\leq L$,
where $L$ the length of the strings in $X$. The breadth is bounded by the size of the alphabet $\Sigma$. Clearly, 
fixing the prefix of solutions partitions the solution space. 
Moreover, no solution gets lost if we only 
keep those sub-instances that  contain a solution according to the flashlight.
Thus, in total, by traversing this search tree in a depth-first fashion and outputting the solutions at the leaves, we get an $\fptd$ using $\fpt$ space.
\end{proof}

\noindent
{\bf Generalization and formalization.}
As with bounded search tree partition, 
we need to make sure that each step can be done in $\fptt$
and 
partitions the solution space, and that the solutions at the leaf nodes can be enumerated with $\fptd$. 
The maximal depth $\gamma$ may now also depend  on $|I|$, with the dependence being polynomial. 
An $\FPT$-decision ``flashlight'' algorithm is thus needed to check the existence of solutions.
We formalize these requirements in the following definition.

\begin{defi}\label{def:flash_partition}
    For an enumeration problem $Q$, we call a tuple of functions $(f,\gamma)$ a \emph{flashlight partition algorithm} 
    if, for some computable function $h\colon \N \to \N$ and polynomial $p$, the following conditions hold:
    \begin{itemize}
        \item $f \colon \Sigma^* \times \N \to  \mathcal{P}(\Sigma^* \times \N)$ is an  $\fptt$ function such that for all $(I,k)\in \Sigma^* \times \N$: 
        \begin{itemize}
            \item for all $(I',k') \in f(I,k)$,  $|(I',k')| \leq |(I,k)|$ and $k' \leq k$,
            \item for all $(I',k') \in f(I,k)$, $Q(I',k') \neq \emptyset$, 
            \item $Q(I,k) = \biguplus_{(I',k') \in f(I,k)} Q(I',k')$,
        \end{itemize}
        \item $\gamma\colon \Sigma^* \times \N \to \N$ is an $\fptt$ function such that for all $(I,k)\in \Sigma^* \times \N$: 
        \begin{itemize}
            \item $\gamma(I,k) \leq h(k)p(|(I,k)|)$, %
            \item if $\gamma(I,k) \neq 0$, then for all $(I',k') \in f(I,k)$, $\gamma(I',k') < \gamma(I,k)$.
        \end{itemize}
    \end{itemize}
     Additionally, we require that there must exist a  function $f_0$, which enumerates $Q(I,k)$ with $\fptd$ when $\gamma(I,k) = 0$.
\end{defi}

The ``flashlight'' is implicit in the function $f$, which (in $\fptt$) 
only generates subproblems $(I',k')$ with non-empty solution set (by the second condition on $f$
in Definition~\ref{def:flash_partition}).
The depth of the search tree is bounded by an $\fpt$-function because of the measure $\gamma$ that decreases at each step along a branch. Together with the ``flashlight'', which guarantees that 
no branch of the search tree is explored in vain, we thus get $\fptd$.
We note that this method is related to the algorithmic paradigm of self-reducibility studied 
 in \cite{DBLP:journals/corr/CreignouMMSV13}.

\begin{prop}\label{prop:flashlight_partition}
    Let $Q$ be a parameterized enumeration problem. If $Q$ has a flashlight partition algorithm $(f,\gamma)$, then $Q$ is in $\DelayFPT$.
\end{prop}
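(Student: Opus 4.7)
The plan is to run a depth-first traversal of the search tree rooted at $(I,k)$, whose children at any node $(I',k')$ are produced by $f(I',k')$, whose leaves are exactly the nodes with $\gamma(I',k') = 0$, and at each leaf we call $f_0$ to emit the solutions of that sub-instance. This mirrors the algorithm sketched for \clst in the proof of Proposition~\ref{prop:closet_string} and lifts it to the abstract setting of Definition~\ref{def:flash_partition}.

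First I would establish correctness by induction on $\gamma$: the leaves reachable from any node $(I',k')$ partition $Q(I',k')$. The base case $\gamma(I',k') = 0$ is immediate, and the inductive step uses the third and fourth conditions on $f$ (solutions are preserved by the union, and the children's solution sets are pairwise disjoint) together with the fact that $\gamma$ strictly decreases along every branch and is bounded, so every branch reaches a leaf in finitely many steps. Consequently, concatenating the outputs of $f_0$ at the leaves yields exactly $Q(I,k)$ with no duplicates.

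Next I would bound the delay. Let $D := h(k)p(|(I,k)|)$, which upper-bounds the depth of the tree. Immediately after producing a solution, the algorithm either (a) continues inside the current leaf using $f_0$, incurring at most one $\fptd$-delay of $f_0$, or (b) backtracks up to the deepest ancestor with an unvisited child and descends to the next leaf. Backtracking touches at most $D$ nodes, and on the descent each freshly visited node requires one evaluation of $f$, costing $\fptt$. The crucial point is that although the branching factor of $f$ is not a priori bounded by any function of $k$, the second condition on $f$ (the flashlight property $Q(I',k') \neq \emptyset$ for every generated child) guarantees that every descent actually reaches a leaf whose $f_0$ produces at least one solution. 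Hence between two consecutive outputs the algorithm does at most $D$ invocations of $f$ plus one call to $f_0$, totalling $\fptt$.

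The main obstacle, and the reason the flashlight condition is built into Definition~\ref{def:flash_partition}, is precisely the unbounded branching: without that condition one could generate many siblings that turn out to have empty solution sets, forcing the algorithm to try them one by one and wrecking the delay guarantee. A small engineering detail I would add is to cache $f(I',k')$ at each node the first time it is computed together with the index of the next unvisited child, so that backtracking never re-invokes $f$ on the same node. With this bookkeeping in place, the traversal is an $\fptd$ enumeration algorithm, which proves $Q \in \DelayFPT$.
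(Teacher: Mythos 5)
Your proposal is correct and follows essentially the same route as the paper: a depth-first traversal of the tree generated by $f$, with $f_0$ invoked at the $\gamma=0$ leaves, correctness from the partition conditions, and the delay bounded by at most $O(h(k)p(|(I,k)|))$ invocations of the $\fptt$ functions $f$ and $\gamma$ between consecutive outputs, since the flashlight condition guarantees no branch is explored in vain. Your additional remarks (the induction on $\gamma$, the observation that $Q(I',k')\neq\emptyset$ forces $f(I',k')\neq\emptyset$ at internal nodes, and the caching of children to avoid recomputation during backtracking) only make explicit details the paper's proof sketch leaves implicit.
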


Again, to fit our $\clst$  algorithm 
from the proof of Proposition~\ref{prop:closet_string}
into this framework, we consider a generalized version of this problem
where instances are extended by a prefix $\omega$. The $\clst$ problem
then corresponds to the special case with an empty prefix $\omega$. Crucially, 
the function $f$ now serves two purposes: on one hand, it splits the current instance into new instances by appending one more character to $\omega$; on the other hand, it serves as ``flashlight'' and only retains those new instances which have a solution that extends the current prefix. 
Initially, $\gamma$ is set to $L$. It is then decremented in every splitting step,
when we extend the prefix by one character. 
The function $f_0$ for enumerating all solutions of an instance 
when $\gamma$ reaches the value 0 is particularly simple in this case: we just need 
to output the prefix $\omega$ (which now has length $L$) of the current instance.
For further details, see
\iflncs
\cite{DBLP:journals/corr/abs-2509-11929}.
\else
Appendix~\ref{app:Flashlight}.
\fi

\subsection{Solution Search Partition Algorithms}
\label{subsec:SolutionSearch}

We now want to 
remove the restriction on the depth of the search tree
altogether.
The idea is to alternately perform two steps, namely (1) find a solution and (2) split the 
remaining search space. Intuitively, we thus try to find a spanning tree of the solution space.
As before, 
 the algorithm consists in traversing a tree. However, 
unlike the two previous methods, this tree is not implicitly  made up of sub-instances and partial solutions, with final solutions only at the leaves or computable from the leaves. 
Now, each node of the tree constitutes an instance from which a solution $S$ of the original instance can be produced
in $\fptt$.
Clearly, in the splitting step, we need to create sub-instances that partition the solutions of $(I,k)$ {\em excluding the solution $S$}.
We repeat this branching, exploring the search tree depth-first, until we arrive at a no-instance, at which point we consider the remaining branches.

To maintain $\fptd$ between the output of two successive solutions, we can use the ``alternative output'' technique (see \cite{Uno2003,DBLP:phd/it/Marino12}), which prevents the backtracking through long paths without outputting any solution: it  
outputs the solutions {\em before} the recursive call when the current depth-first search level is even, 
and {\em after} the recursive call when it is odd.

We now construct a solution search partition algorithm for  
{\sc Integer Linear Programming with k Variables} parameterized by the number of variables:
let $k$ and $m$ be positive integers, let $A$ be an $m\times k$ matrix with integer coefficients and $b\in \mathbb{Z}^m$, and let $k$ be the parameter. 
The task is to enumerate all vectors $x\in \mathbb{Z}^k$ satisfying the system of inequalities $Ax\le b$.
Note that the number of solutions is potentially infinite.
To ensure that we can output each solution in 
$\fptt$ (and implicitly eliminate the problem of an infinite solution space), 
we assume that the absolute values of the components of a solution vector 
are bounded by some function $2^{h(k)p(L)}$, where $L$ is the instance-size, 
$h$ is a computable function, and $p$ is a polynomial.
Otherwise, it could happen that simply outputting a single solution cannot be done in $\fptt$ due to its size.
This can be easily enforced by adding appropriate entries to $A$ and $b$. 
Crucially, one can find in $\fpt$-time (w.r.t.\ $k$) 
one solution vector where all components 
have an $\fpt$-bounded binary representation (see \cite[Theorem 6.5]{DBLP:books/sp/CyganFKLMPPS15}).
We denote this algorithm by $\mathcal{A}$.

\begin{prop}\label{prop:ilp}
$\ilplong$ 
is in \\$\DelayFPT$.   
\end{prop}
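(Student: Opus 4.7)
The plan is to instantiate the solution search partition scheme of Section~\ref{subsec:SolutionSearch}. Each node of the search tree will carry a refined ILP derived from the original one, together with a representative solution $x^\ast$ obtained by invoking the algorithm $\mathcal{A}$ of \cite[Theorem~6.5]{DBLP:books/sp/CyganFKLMPPS15}; the representative is the solution the node contributes, and the node then spawns $2k$ children that partition the remaining solution set $\{x\in\Z^k : Ax\le b\}\setminus\{x^\ast\}$.

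The partition is obtained by classifying each $x\neq x^\ast$ by the smallest coordinate $i$ at which $x$ deviates from $x^\ast$ together with the sign of the deviation. Concretely, for every $i\in\{1,\ldots,k\}$ and every $\epsilon\in\{-,+\}$, the child sub-instance is obtained by adding the equalities $x_j = x^\ast_j$ for $j<i$ and either $x_i \le x^\ast_i - 1$ (if $\epsilon=-$) or $x_i \ge x^\ast_i + 1$ (if $\epsilon=+$). Every $x\ne x^\ast$ lies in exactly one such sub-instance, giving the required partition; children with empty solution set are identified and discarded in $\fpt$-time by $\mathcal{A}$ (used here as a flashlight).

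Having established the branching, I would traverse the tree depth-first using the ``alternative output'' technique of \cite{Uno2003,DBLP:phd/it/Marino12}: at even-depth nodes emit the representative \emph{before} recursing, at odd-depth nodes emit it \emph{after}. Since a node's work (one call to $\mathcal{A}$ plus building $2k$ children) is in $\fpt$, the alternation bounds the work between two consecutive outputs by a constant number of node-level steps, and the total delay remains $\fpt$.

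The main obstacle I anticipate is keeping the size of sub-instances under control despite the potentially exponential depth of the search tree: naively appending constraints at every level would blow the sub-instance size up and invalidate the $\fpt$-delay guarantee. I would address this by representing each sub-instance symbolically as the original ILP together with (i) a prefix assignment $x_1=a_1,\ldots,x_j=a_j$ of length at most $k$ and (ii) a single interval $[\ell_i,u_i]$ per still-free coordinate, accumulated by tightening from the inequality constraints along the root-to-node path. Since every value $a_j$, $\ell_i$, $u_i$ arises as a component of some representative solution (possibly offset by one) and is therefore bounded in magnitude by $2^{h(k)p(L)}$, the symbolic representation has total size $\fpt$ in the original $L$. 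Consequently each invocation of $\mathcal{A}$ at any node still runs in $\fpt$-time in the original instance size, yielding the required $\fpt$-delay.
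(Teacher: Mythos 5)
Your proposal is correct and follows essentially the same route as the paper: the same $2k$-way partition of $Q(I,k)\setminus\{x^\ast\}$ by the first coordinate of deviation from the representative solution and its sign, the same use of the $\fptt$ ILP algorithm $\mathcal{A}$ at each node, the same ``alternative output'' trick for the delay, and the same size-control observation (keeping only a prefix of equalities plus the tightest two bounds per free variable, with all constants $\fpt$-bounded by the assumed bound on solution components). No gaps.
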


\begin{proof}
We utilize the $\fptt$ algorithm $\mathcal{A}$, which either finds a solution vector where all components have $\fpt$-bounded size (in binary representation) or detects that no such solution exists.
If a solution has been found, say $(s_1, \dots, s_k)$, then we create new instances by adding (in)equalities of the form $(x_1 = s_1)\land \cdots \land (x_i = s_i)\land (x_{i+1} > s_{i+1})$ or $ (x_1 = s_1)\land \cdots \land (x_i = s_i)\land (x_{i+1} < s_{i+1})$ for some $0\leq i < k$. 
This produces $2k$ sub-instances  for any given instance.
Moreover, these sub-instances clearly partition the solution space of a given problem instance while excluding 
the current solution $(s_1, \dots, s_k)$.
For each variable, we only need to keep up to two (in)equalities of this form (i.e., the most restrictive ones). 
Actually, we can ensure that the size of the considered instance never increases 
by initially adding the two inequalities $-2^{h(k)p(L)} < x_i < 2^{h(k)p(L)}$ for each $i$. 
Hence, the number of (in)equalities in any given instance explored through our algorithm is at most $2k$ more than in the original instance.
In other words, the size of all sub-instances thus produced is polynomially bounded in the size of the input problem instance. 
Hence, by applying the ``alternative output'' technique (see \cite{Uno2003,DBLP:phd/it/Marino12}) recalled above, we thus achieve
$\fpt$ delay,  
even without a bound on the depth of the search tree.
\end{proof}

\smallskip
\noindent
{\bf Generalization and formalization.}
We now remove the depth bound $\gamma$ entirely, and thus the focus is on finding a method by which, for any given solution $S$, we can find (and calculate in $\fptt$) child instances for which the solution spaces are disjoint and exclude precisely $S$. We need a method $f$ 
to generate these child instances, but also a method $g$ to find a solution to any given instance. Also, if no solution exists, we want $f$ to return the empty set.
The following definition captures these requirements.

\begin{defi}\label{def:solution_partition}
    For an enumeration problem $Q$, we call a tuple  of  functions $(f,g)$ a \emph{solution search partition algorithm}, if the following conditions hold: 
    \begin{itemize}
        \item $g\colon \Sigma^* \times \N \to \Sigma^*$ is an $\fptt$ function such that for all  $(I,k)\in  \Sigma^* \times \N$:
       \begin{itemize}
            \item  $g(I,k)$ computes a solution $S \in Q(I,k)$, if it exists.
       \end{itemize}
        \item $f\colon \Sigma^* \times \N  \times \Sigma^* \to  \mathcal{P}(\Sigma^* \times \N)$ is an $\fptt$ function such that for all  $(I,k,S)\in  \Sigma^* \times \N  \times \Sigma^*$:
 \begin{itemize}
        \item if $Q(I,k)=\emptyset$, then  $f(I,k,S)=\emptyset$,
        \item $Q(I,k)\setminus \{S\} = \biguplus_{(I',k') \in f(I,k,S)} Q(I',k')$,
        \item for all $(I',k') \in f(I,k,S)$,  $|(I',k')| \leq |(I,k)|$ and $k' \leq k$,
        \end{itemize}
    \end{itemize}
\end{defi}
Note that, due to the need to store the path we are exploring, which is unbounded in size, there is also no bound on the space requirement of such an algorithm.

\begin{prop}\label{prop:solution_partition}
    Let $Q$ be a parameterized enumeration problem. If $Q$ has  a solution search partition algorithm $(f,g)$, then $Q$ is in $\DelayFPT$.
\end{prop}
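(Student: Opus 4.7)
The plan is to traverse the tree of instances implicitly defined by $g$ and $f$ in a depth-first manner, emitting the solution produced by $g$ at each node under a parity-dependent schedule to control the delay. Concretely, I would define a recursive procedure $\mathrm{Enum}(I',k',d)$ which first invokes $g(I',k')$; if no solution is returned (i.e., $Q(I',k') = \emptyset$), the call terminates immediately. Otherwise, letting $S := g(I',k') \in Q(I',k')$, the algorithm emits $S$ \emph{before} the recursive calls when $d$ is even and \emph{after} them when $d$ is odd, then iterates over the children $(I'',k'') \in f(I',k',S)$ and invokes $\mathrm{Enum}(I'',k'',d+1)$ on each. Enumeration is launched by the top-level call $\mathrm{Enum}(I,k,0)$. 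This is exactly the ``alternative output'' technique of~\cite{Uno2003,DBLP:phd/it/Marino12} already invoked for \ilp in the proof of Proposition~\ref{prop:ilp}.

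Correctness then follows by induction on the recursion. Whenever $Q(I',k') \neq \emptyset$, the defining properties of $f$ give
\[
Q(I',k') \;=\; \{S\} \;\cup\; \bigcup_{(I'',k'') \in f(I',k',S)} Q(I'',k''),
\]
with $S$ not lying in any $Q(I'',k'')$ and with the sets $Q(I'',k'')$ pairwise disjoint. Thus every element of $Q(I,k)$ is produced at exactly one node of the recursion tree and each node emits its own solution exactly once regardless of the parity of $d$. Since every child $(I'',k'')$ satisfies $|(I'',k'')| \le |(I',k')|$ and $k'' \le k'$, all recursive calls operate on instances of size at most $|(I,k)|$ with parameter at most $k$, so each invocation of $g$ and $f$ runs in time $h(k)\,p(|(I,k)|)$ for a common computable $h$ and polynomial $p$ witnessing their $\fpt$-boundedness.

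The main obstacle, and the reason for dropping the depth measure $\gamma$ used in Definitions~\ref{def:bounded_tree_partition} and~\ref{def:flash_partition}, is that the recursion tree may be arbitrarily deep, so a naive depth-first backtracking could open a gap between consecutive outputs that scales with the tree height. The alternative output scheme is designed precisely to close this gap: a standard amortized argument shows that between any two consecutive emitted solutions the traversal performs only a bounded number of steps --- essentially a single descent into a child node or a single ascent to an unexplored sibling, together with the corresponding evaluations of $g$ and $f$. Each such step costs $\fptt$, so the delay, including the time before the first and after the last output, is bounded by $c \cdot h(k)\,p(|(I,k)|)$ for a fixed constant $c$, establishing $Q \in \DelayFPT$.
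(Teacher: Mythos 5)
Your proof is correct and follows essentially the same route as the paper: a depth-first traversal of the tree implicitly defined by $g$ (one solution per node) and $f$ (children partitioning the remaining solutions), combined with the parity-based alternative-output rule to keep the delay bounded despite the unbounded tree depth. The only point worth tightening is your claim that between two outputs the traversal performs ``essentially a single descent or ascent'': since Definition~\ref{def:solution_partition} does not forbid $f$ from producing children with empty solution sets, the traversal may additionally have to skip over several dead children at a node, but as $f$ runs in $\fptt$ it yields only FPT-many children and each dead call to $g$ costs $\fptt$, so the delay is still FPT (just with a somewhat larger bound than the $c\cdot h(k)\,p(|(I,k)|)$ you state).
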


For \ilplong, 
$g$ is the original $\FPT$ algorithm, which we can apply directly as we do not need to generalize the problem setting for our enumeration. Function $f$ is used to create the $2k$ sub-instances with added inequalities. It is also responsible for 
eliminating redundant (in)equalities to make sure that 
we never get an instance with more than two additional (in)equalities per variable
(compared with the original system of inequalities).
The condition $k' \leq k$ on instances produced by 
function $f$ is trivially satisfied since we never add a variable.

\section{Union Enumeration}
\label{sec:UnionEnumeration}

In Section \ref{sec:partition}, we discussed $\fptd$ algorithms that partition the solution space into parts that can  be efficiently enumerated individually. 
As we have outlined, for some problems, this can be naturally achieved.
However, for other problems and other classical techniques to design $\fpt$ decision algorithms, preventing this overlap is less natural (e.g., in the problem considered below).
Thus, we next discuss a method for dealing  with non-empty intersections 
of some parts.

More concretely, suppose that the solution space $\mathcal{S} = \bigcup_i \mathcal{S}_i$
is given as a union, such that  the $\mathcal{S}_i$'s are not 
pairwise disjoint. 
It would be tempting to enumerate one subset $\mathcal{S}_i$ after the other and, before outputting 
the next solution $S \in \mathcal{S}_i$, 
check if it was already output when processing $\mathcal{S}_1, \dots, \mathcal{S}_{i-1}$. However, 
it may happen that we  encounter a long sequence of already output solutions and, therefore,  
$\fptd$ is no longer guaranteed. Instead, we use a technique developed in \cite[Section 2.5]{home-made:Strozecki10} for enumerating elements of a union of sets without duplicates. 
Besides the efficient enumerability of each set in the union (in our case with $\fptd$), this technique requires membership checks to be efficiently doable (in our case in $\fpt$).
Thus, intuitively, we seek to first split the solution space into $\fpt$-many subsets of solutions such that, (i) \textit{every} solution is part of some subset, (ii) every subset can be \textit{enumerated} with $\fptd$, and (iii) \textit{checking} whether a solution is part of a subset can be done in $\fptt$.

A clear application of this method is for derandomized color coding algorithms. 
Thus, we will illustrate the method by considering the $\lpth$ problem, which asks for \textit{simple paths}, i.e., paths that contain every vertex at most once, of length $k$ in a given graph $G$, where $k$ is the parameter. We here consider the length of a simple path to be the number of vertices it contains instead of the number of edges.

\begin{prop}
\label{prop:lpth}
    \lpth is in $\DelayFPT$.
\end{prop}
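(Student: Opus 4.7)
The plan is to apply derandomized color coding and then plug the result into the union-enumeration technique sketched just above the statement. A precomputed $(|V(G)|,k)$-perfect hash family $\mathcal{F}$ of colorings $c\colon V(G)\to[k]$ of size $2^{O(k)}\log|V(G)|$ ensures that every $k$-vertex subset is colored injectively by at least one $c\in\mathcal{F}$. Letting $\mathcal{S}_c$ denote the set of \emph{colorful} simple $k$-vertex paths under $c$ (those that use each color exactly once), every simple $k$-vertex path belongs to some $\mathcal{S}_c$, so the solution set equals $\bigcup_{c\in\mathcal{F}}\mathcal{S}_c$. Moreover, $|\mathcal{F}|$ is $\fpt$-bounded in $k$ and the membership check ``$P\in\mathcal{S}_c$?'' is decidable in polynomial time by inspecting the colors of the vertices of $P$.

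To invoke the union-enumeration technique of \cite{home-made:Strozecki10} as outlined above, it remains to enumerate each $\mathcal{S}_c$ with $\fptd$. I would achieve this via a flashlight partition algorithm in the sense of Definition~\ref{def:flash_partition}, operating on extended instances $(G,c,P)$, where $P$ is a current path prefix and the task is to enumerate colorful $k$-vertex extensions of $P$. The classical color-coding dynamic program, computed once per $c$ in $2^{O(k)}(|V|+|E|)$ time, yields for every pair $(v,S)$ the bit ``is there a colorful path starting at $v$ that uses exactly the free colors $[k]\setminus S$?''. The branching step extends $P$ by each neighbor of its endpoint whose color is fresh and for which the DP table confirms a feasible completion, so the flashlight filters out infeasible branches. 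The measure $\gamma(G,c,P)=k-|P|$ strictly decreases with each step and hits $0$ at the leaves, where $P$ is a full colorful $k$-path and is output by $f_0$. This yields $\fptd$ per coloring and, combined with the union technique applied to the $\fpt$-many $\mathcal{S}_c$ with the polynomial-time membership test above, $\DelayFPT$ for the overall problem.

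The main obstacle I anticipate is ensuring that the flashlight delay is genuinely $\fpt$: the color-coding DP table must be computed only once per $c$, cached for the whole search tree traversal for that coloring, and then each branching step uses it only through look-ups. Provided this preprocessing is absorbed into the very first call (or stored as part of the extended instance for subsequent calls), the per-step cost stays polynomial and the requirements of Definition~\ref{def:flash_partition} are satisfied. The partition property (distinct prefixes induce disjoint solution sets) and the nonemptiness of every explored branch (ensured by the flashlight) then follow directly, completing the argument.
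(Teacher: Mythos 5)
Your proposal is correct and follows essentially the same route as the paper: a perfect hash family gives the covering union $\bigcup_{c}\mathcal{S}_c$, each $\mathcal{S}_c$ is enumerated with $\fptd$ using the color-coding dynamic-programming table, membership in $\mathcal{S}_c$ is a polynomial-time colorfulness check, and the union technique of \cite{home-made:Strozecki10} removes duplicates. The only (cosmetic) difference is that you package the per-coloring enumerator as a flashlight partition algorithm extending prefixes forward, whereas the paper walks backward through precomputed predecessor sets $P(C,u)$ via nested loops; both amount to the same DP-guided traversal.
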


\begin{proof}
We are given a graph $G = (V,E)$ and an integer $k$.
The idea of color coding introduced in \cite{DBLP:journals/jacm/AlonYZ95} applied to $\lpth$ is the following:
\begin{enumerate}
    \item Introduce $k$ colors and color each vertex $v\in V$ in one color $\gamma(v)\in \{1,\dots,k\}$.
    \item For $C= \{1,\dots,k\}$, find a $C$-\textit{colorful} path in $G$ for $\gamma$, i.e., a path where every color $i\in C$ appears exactly once on the path. 
\end{enumerate}
We first discuss how to find all colorful paths in $G$ under a given coloring $\gamma\colon V \rightarrow \{1,\dots, k\}$.
For that, we follow a dynamic programming approach and compute, for every $C\subseteq \{1,\dots, k\}$ and $u\in V$,
the following set $P(C,u)$:
\begin{align*}
    P(C,u) := \{v\in V {} \mid {} & \text{there exists a } C\text{-colorful path } (v_1,\dots,v_{|C|}) \text{ in } G \text{ such that } \\
    & v_{|C|-1} = v, v_{|C|} = u, \text{ and } C = \{\gamma(v_1), \dots, \gamma(v_{|C|})\}\}.
\end{align*} 
That is, for a vertex $u$ and subset of colors $C\subseteq \{1,\dots, k\}$, we keep track of those vertices $v$ that are the predecessors of $u$ on $C$-colorful paths that end in $u$.
Then, clearly, $P(\{1,\dots, k\},u) \neq \emptyset$ if and only if there is a colorful path (of length $k$) that ends in $u$.

To compute all sets $P(C,u)$, we start by iterating through the edges $vw\in E, \gamma(w)\neq \gamma(v)$ and add $v$ to $P(\{\gamma(w), \gamma(v)\}, w)$ (symmetrically for $w$ as well).
Then, for $|C|>2$ and $ \gamma(u)\in C$, we compute $P(C,u)$, using the equality 
\begin{align*}
    P(C,u) = \{v\in V \mid vu\in E \text{ and } P(C\setminus \{\gamma(u)\}, v)\neq \emptyset\}.
\end{align*} 
Thus, in total, we can compute all $P(C,u)$ in $\fpt$-time.
Moreover, we can enumerate all $\{1,\dots,k\}$-colorful paths in $G$ for $\gamma$ with $\fptd$ by using these sets.
Concretely, we can simply use $k$ nested loops: The first loop iterates through all $v_1\in V$ such that $P(\{1,\dots,k\},v_1)\neq \emptyset$. 
The second loop iterates through all $v_2\in P(\{1,\dots,k\},v_1)$.
The $i$-th loop iterates through all $v_i\in P(\{1,\dots,k\}\setminus \{\gamma(v_1),\dots, \gamma(v_{i-2})\},v_{i-1})$.
By construction, none of the considered sets can be empty and each $(v_1,\dots,v_k)$ considered is a unique $\{1,\dots,k\}$-colorful path because it neither shares a color with $v_1, \cdots, v_{i-2}$ nor with $v_{i-1}$.

Note that for a given coloring $\gamma$ and path $\pi$, it is easy to decide if the algorithm described above  will output $\pi$.
To do this, simply check whether $\pi$ is $\{1,\dots,k\}$-colorful for $\gamma$.

Next we will discuss the colorings we use in more detail.
For this, we make use of \textit{perfect hash families}.
A perfect hash family is a set of colorings $\Gamma$ (using $k$ colors) such that for any subset $\pi\subseteq V$ of size $k$ there exists a coloring $\gamma \in \Gamma$ that colors the vertices $v\in \pi$ with unique colors, i.e., if $u, v\in \pi, u\neq v$, then $\gamma(u)\neq \gamma(v)$.
According to \cite{DBLP:conf/focs/NaorSS95}, we can find such a family of colorings $\Gamma$ of size $e^k k^{O(log k)} \log n$ in time $e^k k^{O(log k)} n \log n$, i.e., in $\fpt$-time.
Let $\Pi_\gamma$ be the $\{1,\dots, k\}$-colorful paths for $\gamma \in \Gamma$.
Note that every path in $G$ of length $k$ is $\{1,\dots, k\}$-colorful for some $\gamma \in \Gamma$.

In total, we have that the set of all paths of length $k$ is $\bigcup_{\gamma\in \Gamma}\Pi_\gamma$, we can enumerate each $\Pi_\gamma$ in $\DelayFPT$, and we can check membership of $\Pi_\gamma$ in $\fpt$ (actually also polynomial time).
Now let us apply the technique developed in \cite{home-made:Strozecki10} for avoiding duplicates.
To that end, we do the following:
We take an arbitrary order of the colorings $\{\gamma_1,\dots,\gamma_{|\Gamma|}\}=\Gamma$.
Then, we start enumerating $\Pi_{\gamma_1}$ until the algorithm described above finds its first solution $\pi_1$.
At that point, we pause the enumeration of $\Pi_{\gamma_1}$ and check whether one of $\Pi_{\gamma_2}, \dots, \Pi_{\gamma_{|\Gamma|}}$ contains $\pi_1$.
If this is the case, we discard $\pi_1$ and simply start enumerating $\Pi_{\gamma_2}$ up to the first solution $\pi_2$.
At that point, we also pause the enumeration of $\Pi_{\gamma_2}$ and check whether one of $\Pi_{\gamma_3}, \dots, \Pi_{\gamma_{|\Gamma|}}$ contains $\pi_3$.
We continue this process recursively.
That is, we are searching for the first $\Pi_{\gamma_i}$ whose next solution is $\pi_i$ but which is not part of any $\Pi_{\gamma_{i+1}},\dots, \Pi_{\gamma_{|\Gamma|}}$.
Now we can output $\pi_i$ and \textit{not} output $\pi_1,\dots, \pi_{i-1}$ (they are discarded from $\Pi_{\gamma_1}, \dots, \Pi_{\gamma_{i-1}}$). 
Note that the solutions $\pi_1,\dots, \pi_{i-1}$ will be output later using a different coloring $\gamma\in \Gamma$.
We apply this recursive procedure until for every $\Pi_{\gamma_i}$ every $\pi_i\in \Pi_{\gamma_i}$ was either output or discarded.

Each recursive procedure call requires at most $|\Gamma|$-many calls to a $\DelayFPT$ enumeration procedure and $|\Gamma|$-many $\fpt$ membership calls.
Thus, in total, \lpth is in $\DelayFPT$.
\end{proof}

\noindent
{\bf Generalization and formalization.}
Intuitively, our union enumeration method seeks to first split the solution space into subsets of solutions such that these subsets cover the entire solution space but they are, in general, not disjoint. To combine the enumeration algorithms of these subsets into an enumeration of the entire solution space while avoiding duplicates, we assign an instance $(I,k)$ a finite number of identifiers  (e.g. certain colorings on a graph) $c(I,k)=\{\gamma_1, \cdots \gamma_l\}$, and we need a function $f$ that enumerates a subset of solutions $P_i\subseteq Q(I,k)$ for each identifier $\gamma_i$ without taking too long between consecutive solutions. While we require that these $P_i$ cover the entire set $Q(I,k)$, there may be pairwise overlaps. As has been discussed above, in order to guarantee 
$\fptd$, we use a technique developed in \cite{home-made:Strozecki10} for enumerating elements of a union of sets without duplicates. Besides the efficient enumerability of each set in the union (in our case $\fptd$), this technique requires an (in our case $\fpt$) function $g$ to decide whether a solution $S$ belongs to the solution set $P_i$ of a given identifier $\gamma_i$ in order to quickly check if there is any overlap between solution sets.  
We formalize this idea in the following definition.

\begin{defi}
\label{def:union}
    For an enumeration problem $Q$, we call a tuple of functions $(c,f,g)$  a \emph{union enumeration algorithm} 
    if, for some computable function $h\colon \N \to \N$ and polynomial $p$, the following conditions hold:
    \begin{itemize}
     \item $c\colon \Sigma^* \times \N \to \mathcal{P}(\Sigma^*)$ and %
     $c(I,k)$ can be computed in time
     $O(h(k)p(|(I,k)|))$, %
        \item $f\colon \Sigma^* \times \N \times \Sigma^* \to \mathcal{P}(\Sigma^*)$ is such that for all  $(I,k,\gamma)\in \Sigma^* \times \N \times \Sigma^*$:
        \begin{itemize}
        \item $\bigcup_{\gamma \in c(I,k)} f(I,k,\gamma) = Q(I,k)$,
            \item for every $\gamma \in c(I,k)$, the delay of $f(I,k,\gamma)$ is $O(h(k)p(|(I,k)|))$,
        \end{itemize}
        \item $g\colon \Sigma^* \times \N \times \Sigma^* \times \Sigma^* \to \{0,1\}$ is such that for all  $(I,k,\gamma, S) \in \Sigma^* \times \N \times \Sigma^* \times \Sigma^*$:
        \begin{itemize}
        \item $g(I,k,\gamma,S) = 1$ if and only if ${S \in f(I,k,\gamma)}$,
            \item $g(I,k,\gamma, S)$ can be  computed in time $O(h(k)p(|(I,k)|))$.
        \end{itemize}
    \end{itemize}
\end{defi}

It remains to make explicit how to efficiently enumerate a union of non-necessarily disjoint sets. 
This is done in the proof of the following proposition (for details, see 
\iflncs
\cite{DBLP:journals/corr/abs-2509-11929}).
\else
Appendix~\ref{app:UnionEnumeration}).
\fi

\begin{prop}
\label{prop:UnionEnumeration}
    Let $Q$ be a parameterized enumeration problem. If $Q$ has a union enumeration algorithm $(f,g,c)$, then $Q$ is in $\DelayFPT$.
\end{prop}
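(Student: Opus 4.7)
The plan is to realise Strozecki's duplicate-avoiding scheme for enumerating a union of (not necessarily disjoint) sets \cite{home-made:Strozecki10}, exactly as it was used concretely in the proof of Proposition~\ref{prop:lpth}, but now at the level of abstraction afforded by Definition~\ref{def:union}.

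First I would invoke $c(I,k)$ to obtain the list $\{\gamma_1, \ldots, \gamma_\ell\}$ of identifiers, noting that $\ell \leq h(k)p(|(I,k)|)$ since this is simultaneously the computation time and an upper bound on the output size. For each $i \in \{1, \ldots, \ell\}$ I would instantiate an independent enumeration process $\mathcal{E}_i$ running $f(I,k,\gamma_i)$; all that is needed is a stateful ``next'' primitive that, across invocations, delivers the next element of $P_i := f(I,k,\gamma_i)$ within delay $O(h(k)p(|(I,k)|))$, together with a flag indicating when $P_i$ has been exhausted.

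The main procedure, which produces exactly one output per call, then runs through $i = 1, \ldots, \ell$ in order: if $\mathcal{E}_i$ is exhausted, move on; otherwise draw the next element $S$ from $\mathcal{E}_i$ and evaluate $g(I,k,\gamma_j,S)$ for all $j > i$. If $g$ returns $0$ throughout, output $S$ and return; otherwise discard $S$ and continue with $i+1$. The procedure halts once every $\mathcal{E}_i$ has been flagged exhausted. Correctness reduces to the claim that, for each $S \in Q(I,k)$ and $i^* := \max\{i : S \in P_i\}$, the algorithm outputs $S$ exactly when $\mathcal{E}_{i^*}$ delivers it: no larger index contains $S$, so the membership tests all return $0$; while any earlier draw of $S$ from some $\mathcal{E}_i$ with $i < i^*$ is discarded because $g(I,k,\gamma_{i^*},S) = 1$. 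Coverage of $Q(I,k)$ follows from $\bigcup_i P_i = Q(I,k)$ together with the eventual exhaustion of each $\mathcal{E}_i$.

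Finally I would bound the delay. Within a single call the algorithm performs at most $\ell$ ``next'' operations on the enumeration processes and at most $\binom{\ell}{2} \leq \ell^2$ evaluations of $g$; each costs $O(h(k)p(|(I,k)|))$ by hypothesis, and $\ell$ itself is of this order, so the total delay between two consecutive outputs (as well as the time until the first output and after the last) fits under some bound $h'(k)p'(|(I,k)|)$, yielding $\fptd$. The main obstacle is precisely what Strozecki's idea circumvents: a naive strategy that enumerates $P_1, P_2, \ldots$ consecutively and merely filters out repeats could spend arbitrarily many steps throwing away already-output elements and thereby violate the delay bound. Accepting every candidate at the \emph{largest} index that contains it gives each solution a unique, well-defined emission event, and the counting above converts this into a per-output worst case of $\fpt$-delay.
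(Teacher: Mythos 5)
Your proposal is correct and follows essentially the same route as the paper's proof: compute the $\fpt$-many identifiers via $c$, run the streams $f(I,k,\gamma_1),\dots,f(I,k,\gamma_\ell)$ in a cascade that emits a candidate $S$ only if $g$ certifies it absent from all later sets (so each solution is charged to the \emph{largest} index containing it), and bound the delay by $O(\ell)$ next-operations plus $O(\ell^2)$ membership tests, each of $\fpt$ cost. This matches the paper's argument, including the resulting cubic bound $O\bigl((h(k)p(|(I,k)|))^3\bigr)$ on the delay.
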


To fit our $\lpth$ algorithm from the proof of Proposition \ref{prop:lpth}
into this framework, we would first compute a perfect hash family $\Gamma$ and,
for given instance $I = G = (V,E)$, set 
$c(G,k) = \Gamma$. Moreover, we define $f(G,k, \gamma)$ as the 
set of $\{1, \dots, k\}$-colorful paths, and $g(G,k, \gamma, \pi)$ to indicate if 
$\pi$ is a  $\{1, \dots, k\}$-colorful path. As discussed above, all of these steps can be done in $\fptd$, resp. $\fptt$.

\section{Iterative Compression}
\label{sec:IterativeCompression}
Iterative compression is another classic method to develop $\fpt$ decision algorithms. %
Intuitively, the basic idea 
is to start with a solution $S_\text{sub}$ of a small sub-instance $(I_\text{sub},k)$, and to gradually increase the size of the sub-instance while maintaining a solution until we arrive back at the original instance $(I,k)$.
Maintaining the solution $S_\text{sub}$ and growing the instance $(I_\text{sub},k)$ is done in two steps:
(1) We start by growing $I_\text{sub}$ to $I_\text{next}$ while, at the same time, we increase the parameter $k$ to $k'$.
Here, the increase in the parameter and the access to a solution of the smaller instance should help us to efficiently find a solution $S_\text{next}'$ of $(I_\text{next},k')$.
(2) Then, we \textit{compress} $S_\text{next}'$ down to a solution $S_\text{next}$ of $(I_\text{next},k)$.

We thus naturally arrive at a solution of $(I,k)$ if it exists.
To transform a decision iterative compression algorithm into an enumeration one, it seems natural to simply transform the compression step into an enumeration algorithm.
That is, instead of compressing $S_\text{next}'$ down to a single solution of $(I_\text{next},k)$, we should aim at an algorithm that manages to enumerate \textit{all} solutions of $(I_\text{next},k)$ with the help of $S_\text{next}'$.
In fact, the adaptation is only necessary in the very last compression step, i.e., where $(I_\text{next},k)$ is our original instance.

We will illustrate how an iterative compression decision algorithm can be transformed into an enumeration algorithm with the well-known 
$\vc$, i.e., 
given a graph $G = (V,E)$, we are looking for vertex covers of size $\leq k$ where $k$ is the parameter.
Recall, that a set $S \subseteq V$ is a vertex cover if $G[V\setminus S]$ only contains isolated vertices.

\begin{prop}
\label{prop:VcIterativeCompression}
    $\vc$ is in $\DelayFPT$.
\end{prop}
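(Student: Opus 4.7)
The plan is to turn the standard iterative compression scheme for deciding \vc into an enumeration algorithm by replacing only its final compression step with an enumeration compression. Fix an arbitrary ordering $v_1, \dots, v_n$ of $V(G)$ and let $G_i := G[\{v_1, \dots, v_i\}]$. First, run the usual decision iterative compression, maintaining at step $i$ a vertex cover $S_i$ of $G_i$ with $|S_i| \leq k$: start with $S_0 = \emptyset$, and in each step compress $S_{i-1} \cup \{v_i\}$ (which is a vertex cover of $G_i$ of size at most $k+1$) into some $S_i$ of size at most $k$, aborting with empty output if this is impossible at some step. Once $G_{n-1}$ and $S_{n-1}$ have been produced, we are left with $S' := S_{n-1} \cup \{v_n\}$, a vertex cover of $G$ of size at most $k+1$, and we invoke the enumeration compression subroutine on $(G, S', k)$.

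For the enumeration compression, given a graph $H$ and a vertex cover $S'$ with $|S'| \leq k+1$, I would enumerate all vertex covers $S$ of $H$ with $|S| \leq k$ by branching on the intersection $Y := S \cap S'$, trying each of the $2^{|S'|} \leq 2^{k+1}$ subsets $Y \subseteq S'$. Setting $Z := S' \setminus Y$ and $N := N_H(Z) \setminus S'$, the key observation is that a vertex cover $S$ with $S \cap S' = Y$ exists if and only if $Z$ is independent in $H$, $N \subseteq S$, and $|Y| + |N| \leq k$; when these conditions hold, every such $S$ has the form $S = Y \cup N \cup T$, where $T$ is a vertex cover of $H' := H[V(H) \setminus (S' \cup N)]$ of size at most $k' := k - |Y| - |N|$. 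Since the branches are indexed by $S \cap S'$, their output sets are pairwise disjoint and no duplicates can arise across branches. Within each surviving branch I would enumerate the vertex covers $T$ of $H'$ by recursively invoking the same enumeration algorithm on the strictly smaller instance $(H', k')$; whenever $H$ still has an edge, the conditions force $|Y| + |N| \geq 1$, so $k' < k$ and the recursion is well-founded, with the edgeless base case handled directly by enumerating all subsets of the remaining vertices of size at most $k'$.

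The delay analysis is then a bounded-search-tree argument in the spirit of Section~\ref{subsec:boundedSearchTree}: the recursive branching has breadth at most $2^{k+1}$, depth at most $k$, and every internal step (checking independence of $Z$, computing $N$, setting up the recursive call, and composing $Y \cup N \cup T$ for output) runs in polynomial time in $|H|$, so the search tree has $\fpt$-bounded size. Traversing it depth-first and outputting each completed vertex cover as soon as it is assembled gives $\fptd$. The main technical obstacle I anticipate is verifying the partition property cleanly, namely that every vertex cover $S$ of $H$ with $|S| \leq k$ arises from exactly one choice of $Y \subseteq S'$ and that the stated conditions on $Z$ and $N$ are also sufficient; once that is nailed down, the rest is routine bookkeeping for the iterative compression outer loop together with the bounded recursive tree inside the final compression step.
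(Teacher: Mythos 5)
Your proof is correct and follows essentially the same route as the paper: the same outer iterative compression loop, and the same final compression step turned into enumeration by branching on the intersection $Y = S \cap S'$ with the size-$(k+1)$ cover, which makes the branches pairwise disjoint and yields the minimal cover $Y \cup N$ plus arbitrary extensions. The only cosmetic difference is your recursive call on $H' = H[V(H)\setminus(S'\cup N)]$: since $S'$ is a vertex cover of $H$, this graph is always edgeless, so the recursion bottoms out immediately and collapses to the paper's direct enumeration of all subsets $T$ of the remaining vertices with $|Y|+|N|+|T|\le k$.
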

\begin{proof}
Let $(G,k)$ be an instance of $\vc$.
For the graph $G= (V,E)$ with $V = \{v_1, \cdots, v_n\}$, let $G_i$ denote the subgraph induced by the vertex set $\{v_1, \dots, v_i\}$.
These graphs $G_i$ will be the sub-instances considered by the iterative compression algorithm.
We can do the following:

First, note that, trivially, $\emptyset$ is a vertex cover of $G_0 = (\emptyset, \emptyset)$ and, hence, $(G_0,k)$ can be our starting instance and $\emptyset$ our initial solution $S_0$.
Then, inductively, for a vertex cover $S_i$ of $G_i$ of size $\leq k$, the set $S'_{i+1} = S_i \cup \{v_{i+1}\}$ is a vertex cover of $G_{i+1}$ of size $\leq k+1$.
To compress $S'_{i+1}$ to a vertex cover of size $k$ of $G_{i+1}$, we observe that we can partition the set of 
all vertex covers of $G_{i+1}$ according to their intersection with $S'_{i+1}$.
Conversely, given a  vertex cover $S$  of $G_{i+1}$, we can split $S'_{i+1}$ into subsets 
$C, F \subseteq S'_{i+1}$ with $S'_{i+1} \cap S = C$ and $S'_{i+1} \setminus  S = F$.
Let $N_{G_{i+1}}(F)$ be the union of the neighbors of the vertices $F$ in $G_{i+1}$, excluding $F$ itself. 
Clearly, if a vertex cover $S$ of $G_{i+1}$ does not contain any of the vertices in $F$, then it must contain all of $N_{G_{i+1}}(F)$. In fact, there exists a vertex cover $S$ having $S \cap S_{i+1} = C$ if and only if $C \cup N_{G_{i+1}}(F)$ is a vertex cover.
Hence, to compress $S'_{i+1}$, it suffices to iterate over all partitionings $S'_{i+1} = C \cup F$.
Note that there are only up to $\fpt$-many partitions.
For each partition, we simply have to check whether $S_{i+1} = C \cup N_{G_{i+1}}(F)$ is a vertex cover of size at most $k$.

As long as $i+1 < n$, we can use {\em the first} vertex cover $S_{i+1}$ of size $\leq k$ of 
$G_{i+1}$ as the compressed solution. (The decision procedure would also do this at $i+1=n$.)
Thus, we proceed with $S_{i+1}$ to $G_{i+2}$. 
In contrast, at $i+1 = n$, we now have to output \textit{all} vertex covers of $G_{i+1}=G$ of size $\leq k$.
The good thing is that the compression procedure nicely allows for that.
To do so, we simply go over all $S_n=S_{i+1} = C \cup N_{G}(F)$ that are vertex covers of size $\leq k$.
Then, for each such $S_n$, we output every $\hat{S}_{n} \supseteq S_{n}$ where $ \hat{S}_{n}\setminus S_{n} \subseteq V\setminus S'_{n}$ and $|\hat{S}_{n}|\leq k$.  
That is, we output $S_{n}$ and all supersets of $S_{n}$ with 
$\leq k$ elements,
obtained by only adding vertices from $V\setminus S'_{n}$.
This guarantees that we will never produce duplicates. 
Moreover, it is easy to see that this enumeration can be done with $\fptd$.
Thus, in total, this enumeration version of iterative compression solves $\vc$ with $\fptd$.
\end{proof}

\noindent
{\bf Generalization and formalization.}
To transform decision algorithms based on iterative compression into enumeration 
algorithms, we describe the \textit{growing} and \textit{compression} steps very abstractly.
Concretely, we \textit{grow} the instances $I$ one character $\sigma\in \Sigma$ at a time.
Thus, from a solution $S_{\text{sub}}$ of $(I_{\text{sub}},k)$ we want to be able to determine a solution $S'_{\text{next}}$ of $(I_{\text{sub}}\cdot \sigma, k+1)$ in the growing step, and then compress it down to a solution $S_{\text{next}}$ of $(I_{\text{sub}}\cdot \sigma, k)$.
These two steps are performed by the functions $g$ and $c$, respectively, in the definition below.
Note that for the iterative compression method to work, we need solvability to be preserved when taking sub-instances.
Hence, we restrict ourselves here to enumeration problems $Q$ which 
are such that for any character $\sigma \in \Sigma$, if $Q(I \cdot \sigma,k) \neq \emptyset$, then $Q(I,k) \neq \emptyset$. Note that, while we formalize these steps as growing and compressing $k$ by 1 and $I$ by a single character, these steps obviously also work for other ways of growing and compressing instances and solutions.

\begin{defi}
\label{def:comp}
    Let $Q$ be an enumeration problem such that for any character $\sigma \in \Sigma$, if $Q(I \cdot \sigma,k) \neq \emptyset$, then $Q(I,k) \neq \emptyset$.
    We call a tuple of functions $(g,c)$ an {\em iterative compression algorithm} for $Q$,
    if
    the following conditions hold: 
    \begin{itemize}
        \item $g\colon \Sigma^* \times \Sigma \times \N \times \Sigma^* \to \Sigma^*$ is an $\fptt$ function such that, for $(I,k,S) \in Q$, 
        we have $g(I, \sigma, k,S) \in Q(I \cdot \sigma, k+1)$ for any $\sigma \in \Sigma$,
        \item $c\colon\Sigma^* \times \N \times \Sigma^* \to \mathcal{P}(\Sigma^*)$ is a function which, for any   $(I,k+1,S) \in Q$, enumerates all $S' \in Q(I,k)$ with $\fptd$. 
    \end{itemize}
\end{defi}

An $\fptd$ algorithm based on this definition works as follows. 
We first restrict our problem to a small part, where finding a solution is trivial (e.g., a small or empty subgraph for graph problems or a constant sized $I$ in general).
Then we apply two steps repeatedly: The first step, corresponding to the function $g$, is the growing step, in which we take our reduced instance and solution and reintroduce deleted information one step at a time, increasing both instance and parameter such that we can easily find a new solution $S$ to this new instance $(I',k+1)$.
The second step, corresponding to the function $c$,  is the compression step, where we use $S$ to find all solutions in $Q(I',k)$ (and stop if none are found). We continue rebuilding and then compressing (until the first solution is found) until we arrive at our initial instance and can now output all solutions in $Q(I,k)$.
Note that we could actually allow the parameter in each iteration of $g$ to grow by a function $b(k)$,
provided that $c$ can compress the solution down to size $k$.

\begin{prop}
\label{prop:IterativeComp}
    Let $Q$ be a parameterized problem. If $Q$ has an iterative compression algorithm $(g,c)$, then $Q$ is in $\DelayFPT$.
\end{prop}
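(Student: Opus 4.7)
The plan is to assemble the functions $g$ and $c$ from Definition~\ref{def:comp} into a single enumeration procedure that first constructs a ``ladder'' of intermediate solutions along prefixes of $I$ and then hands control over to $c$ on the full instance, which already enumerates $Q(I,k)$ with $\fpt$-delay. Concretely, fix an instance $(I,k)$, write $I = \sigma_1 \cdots \sigma_n$, and let $I_i := \sigma_1 \cdots \sigma_i$. I assume, as is implicit in the iterative compression paradigm (and visible in the $\vc$ proof), that some constant-size prefix $I_0$ admits a trivially computable seed $S_0 \in Q(I_0,k)$; in the purely abstract setting one may take $I_0 = \varepsilon$ with a canonical empty solution.

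First, I would iteratively build the ladder: for $i = 0, \dots, n-2$, set $S'_{i+1} := g(I_i, \sigma_{i+1}, k, S_i) \in Q(I_{i+1}, k+1)$ and take $S_{i+1}$ to be the \emph{first} solution produced by $c(I_{i+1}, k+1, S'_{i+1})$. Since $g$ is $\fptt$ and $c$ has $\fpt$-delay, each rung of the ladder costs $h(k+1)\cdot p(|(I,k)|)$ time for some $h$ and polynomial $p$ inherited from $g$ and $c$, so the entire preamble costs $n \cdot h(k+1) \cdot p(|(I,k)|)$, which remains $\fpt$ in $(I,k)$. If at some intermediate rung $c$ produces no solution, then $Q(I_{i+1},k) = \emptyset$; applying the hypothesis on $Q$ (namely $Q(I\cdot\sigma,k)\neq\emptyset \Rightarrow Q(I,k)\neq\emptyset$) contrapositively along the remaining characters $\sigma_{i+2},\dots,\sigma_n$ forces $Q(I,k) = \emptyset$, so the algorithm halts correctly without output.

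For the final rung I would compute $S'_n := g(I_{n-1}, \sigma_n, k, S_{n-1}) \in Q(I,k+1)$ and invoke $c(I, k+1, S'_n)$, streaming its output as the algorithm's own output. By Definition~\ref{def:comp} this call enumerates exactly $Q(I,k) = Q(I_n,k)$ with $\fpt$-delay between consecutive solutions and $\fpt$-time to terminate after the last one. The time until the first output is bounded by the preamble cost plus the first-solution delay of $c$, both $\fpt$; every subsequent delay is an $\fpt$-delay of $c$; and the final termination delay is likewise $\fpt$. This yields a $\DelayFPT$ algorithm, as required.

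The main obstacle is conceptual rather than computational: one has to verify that the invariants demanded by Definition~\ref{def:comp} are preserved throughout the ladder, namely $S_i \in Q(I_i,k)$ before each call to $g$ and $S'_{i+1} \in Q(I_{i+1},k+1)$ before each call to $c$. These are immediate from the declared output types of $g$ and $c$. A minor subtlety is that the definition does not explicitly prescribe a seed instance/solution pair; for concrete problems such as $\vc$ this seed is trivial, but a fully abstract statement could be tightened by adding a ``seed'' function producing an initial $(I_0,S_0)$ in $\fpt$-time.
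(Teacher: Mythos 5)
Your proposal is correct and follows essentially the same route as the paper's proof: build the ladder of prefix solutions $S_0, S_1, \dots$ by alternating $g$ with a single run of $c$ to its first output, abort via the monotonicity hypothesis if some rung is empty, and let the final call to $c$ on the full instance stream all of $Q(I,k)$ with $\fpt$-delay. Your remark about the unspecified seed $(I_0, S_0)$ is a fair observation; the paper likewise just assumes a trivial solution of the empty subproblem.
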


To relate our algorithm from the proof of Proposition \ref{prop:VcIterativeCompression}
to this framework, we have used a slightly more general growing function $g$, which 
allowed us to extend a solution by yet another vertex (and not just a single character). The function $c$ takes a vertex cover $S$ of size $k+1$ as input and 
enumerates all vertex covers of size $\leq k$ of the same graph. This enumeration
is done by partitioning the resulting vertex covers according to their
intersection with $S$.

\newcommand{\dist}{\mathrm{dist}}
\newcommand{\steiner}{\mathcal{S}}

\section{Dynamic Programming}
\label{sec:DynamicProgramming}

Dynamic programming (DP) is a widely used method
for solving problems by 
splitting instances recursively into ever smaller sub-instances, 
and ultimately inductively combining the solutions 
of smaller sub-instances to solutions of bigger ones. 
DP also has many applications in 
the area of fixed-parameter algorithms, see, e.g., the \lpth problem revisited in Section~\ref{sec:UnionEnumeration}. 
Above all, 
DP algorithms use a table $T$, whose size is bounded by $O(f(k) n^{O(1)})$, to 
represent the sub-instances of interest. 
Then the minimum machinery needed are 
methods for initializing $T$ and for propagating entries of $T$ from smaller sub-instances to bigger ones, ``backlinks'' (from a bigger sub-instance to smaller ones) to keep track of the ``provenance'' of each entry in $T$, and another method for propagating partial solutions from smaller sub-instances to bigger ones.
The DP-based enumeration algorithm discussed in Section~\ref{sec:UnionEnumeration} 
was rather straightforward, as every sub-instance 
had a unique backlink.
However, in general, quite some additional care is required to avoid duplicates.

To illustrate the additional measures often needed to get a proper 
$\fpt$-enumeration algorithm via DP, we  consider the {\sc Steiner Tree} problem.
That is, we are given an undirected graph $G= (V, E)$ with an edge weight function assigning positive weights to each edge and a subset $K \subseteq V$ of ``terminal nodes''. 
A \emph{Steiner tree} for $K$ in a graph $G$ is a connected subgraph $H$ of $G$ that contains $K$. The weight of a subgraph $H$ is the sum of its edge weights. 
Then the enumeration problem consists of enumerating all Steiner trees with minimum weight. 
The decision problem (i.e., does there exist a Steiner tree of weight $\leq B$ for some bound $B$) 
is known to be $\fpt$ w.r.t.\ $k = |K|$. It has an elegant $\fpt$-algorithm (see \cite[Section 6.1.2]{DBLP:books/sp/CyganFKLMPPS15}), which works as follows:  

W.l.o.g., we may assume that $|K| > 1$ and that each terminal has degree 1 in $G$ and its neighbor is a non-terminal.
The first assumption is justified as otherwise $(K,\emptyset)$ is the only minimal weight Steiner tree and the second assumption is justified as otherwise we can simply modify $G$ to have this property.
That is, for each $k\in K$ a new neighbor $v_k\not\in V$ to $v_k$ with an arbitrarily weighted edge.
Then, $K':=\{v_k\mid k\in K\}$ are the new terminal nodes for $G'=(V\cup K', E \cup \{kv_k\mid k\in K\})$.
Note that $(G,K)$ and $(G',K')$ admit the same minimal weight Steiner tree (modulo adding/removing $K'$).

The data structure to be propagated by the DP algorithm consists 
of entries $T[D,v]$ with $D \subseteq K$ and $v \in V \setminus K$. Then $T[D,v] = w$ means that the minimum weight of a Steiner tree of 
$D \cup \{v\}$ in $G$ is $w$. The idea of the vertex $v$ (which of course does not necessarily have degree 1 in $G$) is 
that we combine Steiner trees of ``sub-instances'' (i.e., two disjoint subsets $D_1,D_2 \subseteq K$) by merging them at vertex $v$ to get a Steiner tree of $D_1 \cup D_2$. In the sequel,
we will refer to Steiner trees with minimal weight simply as ``Steiner trees'' (omitting the minimality).
Moreover, we will denote the set of Steiner trees of $D \cup \{v\}$ in $G$ as $\steiner (D,v)$ and call them 
the ``solutions'' of $T[D,v]$. The set of Steiner trees of $G$ is obtained as the union of those $\steiner (K,v)$, such that 
$T[K,v]$ is minimum over all $v$.

The data structure $T$ is filled in as follows: as initialization step of $T$, 
we compute all entries of the form $T[\{t\},v] = \dist(t,v)$, where $t \in K$, $v \in V\setminus K$,
and $\dist(t,v)$ denotes the distance 
(i.e., the minimal weight of a connecting path) 
from $t$ to $v$. 
Then, existing entries of $T$ are combined recursively via the formula
\begin{equation}
\label{eq:steinerRecursion}
T[D,v] = \min_{\parbox[t]{42pt}{\scriptsize  \mbox{$u \in V \setminus K$} \\ 
   \mbox{$\emptyset \neq D' \subsetneq D$}} }
   \{T[D',u] + T[D\setminus D',u] + \dist(u,v) \}  
\end{equation}
\noindent
Intuitively, a Steiner tree for $D \cup \{v\}$ is obtained by merging Steiner trees
for $D' \cup \{u\}$ and $(D \setminus D') \cup \{u\}$ 
at vertex $u$ and then adding a minimal weight connecting path from 
$u$ to $v$. This includes the special case that $u = v$. 
It is shown in \cite[Lemma 6.2]{DBLP:books/sp/CyganFKLMPPS15} that
one thus gets all entries $T[D,v]$ with $\emptyset \neq D \subseteq K$
and $v \in V \setminus K$. Hence, deciding if a 
Steiner tree of weight $\leq B$ exists comes down to checking 
if an entry $T[K,v] \leq B$ exists.
To also output a solution in case of a yes-instance, one has to maintain backlinks 
for each application of the above recursion and assemble a Steiner tree of $D \cup \{v\}$ by merging 
two subtrees and adding a shortest path as described above. 
Maintaining such backlinks and 
combining solutions from smaller sub-instances to solutions of bigger ones 
is routine in the area of DP algorithms. 

However, if we want to enumerate {\em all} Steiner trees, 
we are immediately faced with the challenge of avoiding {\em duplicates}.
Indeed, an entry $T[D,v]$ can possibly be 
obtained by combining entries $T[D',u_1]$ and $T[D \setminus D',u_1]$ 
but also by combining entries $T[D'',u_2]$ and $T[D \setminus D'',u_2]$ 
with $D' \neq D''$
and/or $u_1 \neq u_2$.
Likewise, the solutions of $\steiner[D,v]$
(i.e., Steiner trees of $D \cup \{v \} ]$) 
obtained by combining solutions of $\steiner[D',u_1]$ and $\steiner[D \setminus D',u_1]$  
and the solutions of  $\steiner [D,v]$
obtained by combining solutions of $\steiner[D'',u_2]$ and $\steiner[D \setminus D'',u_2]$  
are not necessarily disjoint.
Hence, recursively iterating through all possible 
backlinks, starting from the
entries $\steiner[K,v]$ with minimal weight, 
would inevitably produce duplicates.

To achieve unique provenance, intuitively, we have to make sure that for each Steiner tree $S\in \steiner [D,v ]$ there is a unique partitioning $D = D' \cup D\setminus D'$ and a unique merge node $u\in V\setminus K$.
This requires several modifications 
compared with the decision algorithm, namely: (1) we 
(arbitrarily) fix a strict order on the terminals in $K$,
(2) we extend entries in table $T$ to $T[D,v,b]$, where 
$b$ is a bit indicating if, in the Steiner trees of $D \cup \{v\}$ represented by $T[D,v,b]$, 
$v$ has degree 1 (indicated by  $b=1$) or degree $> 1$ (indicated by $b = 0$), 
and (3) we impose the restriction that the recursion in Equation (\ref{eq:steinerRecursion}) may only be applied if $D'$ contains the smallest 
terminal in $D$ (according to the chosen order) and the left operand 
is of the form $T[D',u,1]$, i.e., $u$ has degree 1 in all Steiner trees thus represented. 
Analogously, we adapt the definition of $\steiner[D,v,b]$.

We claim that, with these modifications, we indeed achieve a unique provenance of each entry $\steiner[D,v,b]$
and that we can recurse through the backlinks of these entries so as to output all Steiner trees
of a given problem instance with $\fptd$. The formal proof of this claim is given in 
\iflncs
\cite{DBLP:journals/corr/abs-2509-11929}.
\else
Appendix~\ref{app:DynamicProgramming}.
\fi
We thus get:

\begin{proposition}
{\sc Steiner Tree} is in $\DelayFPT$.
\end{proposition}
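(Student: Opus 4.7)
My plan is to realise the modified dynamic programming algorithm spelled out above and then extract every minimum-weight Steiner tree by a depth-first traversal of its backlinks. First I would fill in the augmented table $T[D,v,b]$ in $\fpt$-time using the recursion of Equation~(\ref{eq:steinerRecursion}), restricted by the two rules that $D'$ must contain $t_{\min}$ (the smallest terminal in $D$ under the fixed order) and that the left operand has the form $T[D',u,1]$. The correctness of this restricted recursion (i.e., that it still produces the minimum weight in every cell) can be argued as in \cite{DBLP:books/sp/CyganFKLMPPS15}: from any unrestricted optimal decomposition one obtains a restricted one of the same weight by re-rooting, namely among all admissible split vertices on the $t_{\min}$-side we take the one closest to $t_{\min}$, at which point $u$ is automatically a leaf of the subtree containing $t_{\min}$.

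The crux is the \emph{unique provenance} claim: each $S \in \steiner[D,v,b]$ arises from exactly one admissible triple $(D',u,b')$. I would argue this by reading off $(D',u)$ from $S$ itself. Since every terminal has degree~$1$ in $G$ (and hence in any Steiner tree) by the preprocessing step, $t_{\min}$ is a leaf of $S$; walking from $t_{\min}$ outward along $S$, there is a unique vertex $u \in V \setminus K$ at which we cut so that the component $S_1$ containing $t_{\min}$ has $u$ as its only boundary vertex and $u$ has degree $1$ in $S_1$. Once $u$ is fixed, $D'$ is forced to be the set of terminals in $S_1$, and the shortest $u$--$v$ path $P$ is the remaining portion of $S$ connecting $u$ to $v$. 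Hence the backlink is determined by $S$, so no Steiner tree can be produced twice as we traverse the backlinks.

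Given uniqueness, I would enumerate all minimum-weight Steiner trees by recursively iterating over admissible backlinks starting from every pair $(v,b)$ that minimises the total weight (including the trivial edges to the degree-$1$ terminals) attained by $T[K,v,b]$. For each cell I would recurse into $\steiner[D',u,1]$ and $\steiner[D\setminus D',u,b']$ and combine their outputs via Cartesian product, then append any shortest $u$--$v$ path enumerated by an auxiliary polynomial-delay DP on $\dist(\cdot,v)$. The recursion has depth at most $k = |K|$ because $|D|$ strictly decreases, and each cell has at most $\fpt$-many admissible backlinks, so the total work per output is $\fpt$. To convert this into $\fpt$-delay rather than $\fpt$-time per output, I would apply the alternative-output technique of \cite{Uno2003,DBLP:phd/it/Marino12} at every recursion level so that each recursive descent and ascent is interleaved with an actual emission.

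The principal obstacle will be pinning down the uniqueness argument: I need to verify rigorously that the first cut vertex $u$ described above is well defined in every case (including degenerate ones such as $u = v$, $|D'| = 1$, or $S_1$ being a bare path), that the resulting $S_1$ really has $u$ as a degree-$1$ vertex, and simultaneously that the restricted recursion still covers every minimum-weight Steiner tree. A secondary subtlety I would have to address is that the pieces $S_1$, $S_2$ and $P$ may a priori share vertices beyond $u$; since we only care about minimum-weight Steiner trees, any such overlap either contradicts the minimality of a component or identifies two triples that coincide, and hence does not threaten the duplicate-free enumeration.
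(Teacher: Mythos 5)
Your plan matches the paper's proof essentially step for step: the same augmented table $T[D,v,b]$ with the degree bit, the same restriction that $D'$ must contain the smallest terminal and that the left operand be $T[D',u,1]$, a unique-provenance argument that reads the split $(D',u)$ off each tree, and a recursive backlink traversal (combined with shortest-path enumeration) to emit all minimum-weight Steiner trees with $\fpt$-delay. The one spot where the paper is sharper — and which resolves the obstacle you flag yourself — is the identification of the cut vertex: rather than walking outward from $t_{\min}$ (where several vertices satisfy your stated conditions), the paper roots $S$ at $v$ and takes $u$ to be the first vertex of degree at least $2$ on the walk down from $v$ (with $u=v$ when $b=0$); uniqueness then follows because any deeper choice of cut vertex would count the edges between the two candidates twice, once inside a $T[\cdot,u]$ operand and once inside $\dist(u,v)$.
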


\section{Conclusion}
\label{sec:Conclusion}

We have revisited several well-established methods for designing $\fpt$ decision algorithms 
in order to transform them into $\fptd$ enumeration algorithms.
We have illustrated by various examples the subtle differences and the special care required for enumeration algorithms -- 
in particular, the need to obtain all solutions and to avoid duplicates.

While we have explored the design of $\fptd$ enumeration algorithms starting from $\fptt$ techniques, one could also proceed in the
other direction and  adapt enumeration techniques to the parameterized setting. For instance, the well-known reverse-search method \cite{DBLP:journals/dam/AvisF96},
as well as the new  and very promising proximity search method \cite{conte2022proximity} would probably be worth a systematic investigation from the $\FPT$ point of view.

\section*{Acknowledgements}

The work of N.~Creignou has 
been supported by the Agence Nationale de la Recherche (ANR) project PARADUAL [ANR-24-CE48-0610-01].
The work of T.~Merkl, R.~Pichler, and D.~Unterberger has been supported by 
the Vienna Science and Technology Fund (WWTF) [10.47379/ICT2201].

 \bibliographystyle{splncs04}
 \bibliography{refs}

\begin{thebibliography}{10}
\providecommand{\url}[1]{\texttt{#1}}
\providecommand{\urlprefix}{URL }
\providecommand{\doi}[1]{https://doi.org/#1}

\bibitem{DBLP:journals/jacm/AlonYZ95}
Alon, N., Yuster, R., Zwick, U.: Color-coding. J. {ACM}  \textbf{42}(4),
  844--856 (1995)

\bibitem{DBLP:journals/dam/AvisF96}
Avis, D., Fukuda, K.: Reverse search for enumeration. Discret. Appl. Math.
  \textbf{65}(1-3),  21--46 (1996)

\bibitem{DBLP:conf/sigmod/BerlowitzCK15}
Berlowitz, D., Cohen, S., Kimelfeld, B.: Efficient enumeration of maximal
  k-plexes. In: Proc.\ {SIGMOD} 2015. pp. 431--444. {ACM} (2015)

\bibitem{DBLP:conf/iwpec/BougeretGSS25}
Bougeret, M., Gomes, G.C.M., dos Santos, V.F., Sau, I.: Enumeration kernels for
  vertex cover and feedback vertex set. In: Agrawal, A., van Leeuwen, E.J.
  (eds.) 20th International Symposium on Parameterized and Exact Computation,
  {IPEC} 2025, Warsaw, Poland, September 17-19, 2025. LIPIcs, vol.~358, pp.
  23:1--23:18. Schloss Dagstuhl - Leibniz-Zentrum f{\"{u}}r Informatik (2025).
  \doi{10.4230/LIPICS.IPEC.2025.23},
  \url{https://doi.org/10.4230/LIPIcs.IPEC.2025.23}

\bibitem{ConteGMR2018}
{Conte}, A., {Grossi}, R., {Marino}, A., {Rizzi}, R.: {Efficient enumeration of
  graph orientations with sources.} {Discrete Applied Mathematics}
  \textbf{246},  22--37 (2018)

\bibitem{conte2022proximity}
Conte, A., Grossi, R., Marino, A., Uno, T., Versari, L.: Proximity search for
  maximal subgraph enumeration. SIAM Journal on Computing  \textbf{51}(5),
  1580--1625 (2022)

\bibitem{DBLP:conf/stoc/ConteU19}
Conte, A., Uno, T.: New polynomial delay bounds for maximal subgraph
  enumeration by proximity search. In: Proc.\ {STOC} 2019. pp. 1179--1190.
  {ACM} (2019)

\bibitem{DBLP:journals/tcs/CourcelleM93}
Courcelle, B., Mosbah, M.: Monadic second-order evaluations on
  tree-decomposable graphs. Theor. Comput. Sci.  \textbf{109}(1{\&}2),  49--82
  (1993)

\bibitem{DBLP:journals/algorithms/CreignouKMMOV19}
Creignou, N., Ktari, R., Meier, A., M{\"{u}}ller, J., Olive, F., Vollmer, H.:
  Parameterised enumeration for modification problems. Algorithms
  \textbf{12}(9), ~189 (2019)

\bibitem{DBLP:journals/corr/CreignouMMSV13}
Creignou, N., Meier, A., M{\"{u}}ller, J., Schmidt, J., Vollmer, H.: Paradigms
  for parameterized enumeration. Theory Comput. Syst.  \textbf{60}(4),
  737--758 (2017)

\bibitem{DBLP:books/sp/CyganFKLMPPS15}
Cygan, M., Fomin, F.V., Kowalik, L., Lokshtanov, D., Marx, D., Pilipczuk, M.,
  Pilipczuk, M., Saurabh, S.: Parameterized Algorithms. Springer (2015)

\bibitem{damaschke06}
Damaschke, P.: Parameterized enumeration, transversals, and imperfect phylogeny
  reconstruction. Theor. Comput. Sci.  \textbf{351}(3),  337--350 (2006)

\bibitem{DBLP:journals/jcss/DreierOS24}
Dreier, J., Ordyniak, S., Szeider, S.: {SAT} backdoors: Depth beats size. J.
  Comput. Syst. Sci.  \textbf{142},  103520 (2024).
  \doi{10.1016/J.JCSS.2024.103520},
  \url{https://doi.org/10.1016/j.jcss.2024.103520}

\bibitem{fernau02}
Fernau, H.: On parameterized enumeration. Computing and Combinatorics  (2002)

\bibitem{DBLP:conf/focs/GaspersS13}
Gaspers, S., Szeider, S.: Strong backdoors to bounded treewidth {SAT}. In: 54th
  Annual {IEEE} Symposium on Foundations of Computer Science, {FOCS} 2013,
  Berkeley, CA, USA, October, 26-29, 2013. pp. 489--498. {IEEE} Computer
  Society (2013). \doi{10.1109/FOCS.2013.59},
  \url{https://doi.org/10.1109/FOCS.2013.59}

\bibitem{DBLP:journals/jcss/GolovachKKL22}
Golovach, P.A., Komusiewicz, C., Kratsch, D., Le, V.B.: Refined notions of
  parameterized enumeration kernels with applications to matching cut
  enumeration. J. Comput. Syst. Sci.  \textbf{123},  76--102 (2022)

\bibitem{DBLP:journals/ipl/JohnsonP88}
Johnson, D.S., Papadimitriou, C.H., Yannakakis, M.: On generating all maximal
  independent sets. Inf. Process. Lett.  \textbf{27}(3),  119--123 (1988)

\bibitem{DBLP:journals/siamdm/KanteLMN14}
Kant{\'{e}}, M.M., Limouzy, V., Mary, A., Nourine, L.: On the enumeration of
  minimal dominating sets and related notions. {SIAM} Journal on Discrete
  Mathematics  \textbf{28}(4),  1916--1929 (2014)

\bibitem{Lawler72}
Lawler, E.L.: A solution to computing the $k$ best solutions to discrete
  optimization problems and its application to the shortest path. Management
  Science  \textbf{18}(7),  401--405 (1972)

\bibitem{DBLP:conf/pods/LivshitsK17}
Livshits, E., Kimelfeld, B.: Counting and enumerating (preferred) database
  repairs. In: Proc.\ {PODS} 2017. pp. 289--301. {ACM} (2017)

\bibitem{DBLP:phd/it/Marino12}
Marino, A.: Algorithms for biological graphs. Ph.D. thesis, University of
  Florence, Italy (2012)

\bibitem{home-made:Meier20}
Meier, A.: Parametrised enumeration. Leibniz Universit\"{a}t Hannover,
  Habilitationsschrift (2020)

\bibitem{DBLP:conf/sat/MisraORS13}
Misra, N., Ordyniak, S., Raman, V., Szeider, S.: Upper and lower bounds for
  weak backdoor set detection. In: J{\"{a}}rvisalo, M., Gelder, A.V. (eds.)
  Theory and Applications of Satisfiability Testing - {SAT} 2013 - 16th
  International Conference, Helsinki, Finland, July 8-12, 2013. Proceedings.
  Lecture Notes in Computer Science, vol.~7962, pp. 394--402. Springer (2013).
  \doi{10.1007/978-3-642-39071-5\_29},
  \url{https://doi.org/10.1007/978-3-642-39071-5\_29}

\bibitem{DBLP:conf/focs/NaorSS95}
Naor, M., Schulman, L.J., Srinivasan, A.: Splitters and near-optimal
  derandomization. In: Proc.\ {FOCS} 1995. pp. 182--191. {IEEE} Computer
  Society (1995)

\bibitem{DBLP:books/ox/Niedermeier06}
Niedermeier, R.: Invitation to Fixed-Parameter Algorithms. Oxford University
  Press (2006)

\bibitem{home-made:Strozecki10}
Strozecki, Y.: Enumeration complexity and matroid decomposition. Ph.D. thesis,
  Universit{\'{e}} Paris Diderot, Paris, France (2010)

\bibitem{Uno2003}
Uno, T.: Two general methods to reduce delay and change of enumeration
  algorithms. Tech. rep., NII (2003)

\end{thebibliography}

\iflncs
\else

\clearpage

\appendix

\section{More Details on Section \ref{subsec:boundedSearchTree}}
\label{app:boundedSearchTree}

We first provide a proof sketch of Proposition~\ref{prop:bounded_search_tree}.

\begin{proof}[Proof sketch of Proposition~\ref{prop:bounded_search_tree}] 
Given an instance $(I,k)\in  \Sigma^* \times \N$ the following algorithm enumerates all solutions in 
 $Q(I,k)$ with $\fptd$:
 \begin{itemize}
        \item Calculate $\gamma(I,k)$. 
        \item If $\gamma(I,k) =  0$, apply $f_0$ to enumerate $Q(I,k)$.
        \item Otherwise, recursively apply this algorithm to each $(I',k') \in f(I,k)$  depth first.
    \end{itemize}

By construction, this method correctly enumerates all solutions in $Q(I,k)$. 
The size of the search tree is $O(b(k)^{h(k)})$. At each node $(I',k')$ we can either process in $\fptt$ or enumerate its solution set with $\fptd$. Since by assumption all nodes satisfy $|(I',k')|\le|(I,k)|$,  so the delay is $O(b(k)^{h(k)}l(k)|(I,k)|^c)$ for some constant $c$ and some computable function $l$.
\end{proof}

We now discuss in detail, how the 
{\sc Feedback Vertex Set in Tournaments} (FVST, for short) enumeration problem
fits 
Definition~\ref{def:bounded_tree_partition}:
Recall the additional data structures $C,F$ that we introduced in the proof of 
Proposition~\ref{prop:fvst}. We now integrate these two  vertex sets into the
problem definition and define the {\sc Generalized Feedback Vertex Set in Tournaments} (GFVST, for short)
problem. 

In  GFVST, we are given a tournament $T=(V,A)$, a parameter $k \in \N$, and two sets $C,F \subseteq V$ having $C \cap F = \emptyset$. For such an instance $((T,C,F),k)$, we want to enumerate all feedback vertex sets $S$ of $T$ with $C \subseteq S$, $S \cap F = \emptyset$, and $\# (S\setminus C) \leq k$. Clearly, FVST is the special case of GFVST with instances $((T,C,F),k)$ satisfying 
$C = F = \emptyset$.

In case of the $\gfvst$, 
this means that $f$ is the method by which we split (and adjust $k$, $C$ and $F$ accordingly):
To ensure that the sizes of the generated instances do not increase, i.e., that for all $(I',k') \in f(I,k)$, $|(I',k')| \leq |(I,k)|$, we can encode $C$ and $F$ as binary characteristic words of length $|V|$.
Such ``small'' increases in size can always be avoided by appropriate representations and will be implicitly assumed in the sequel
and  $b(k) =7$. Now if (1) $T-C$ contains a cycle and (2) the subgraph of $T$ induced by $F$ does not contain a cycle,
we set $\gamma((T,C,F),k) = k$. Otherwise, 
$\gamma(I,k) = 0$, i.e., we efficiently output the solution set, which is empty if (2) is violated. \\

\section{More Details on Section \ref{subsec:Flashlight}}
\label{app:Flashlight}

We first provide a proof of Proposition~\ref{prop:flashlight_partition}.
   
\begin{proof}[Proof of  Proposition~\ref{prop:flashlight_partition}] 
  Given an instance $(I,k)\in  \Sigma^* \times \N$  we apply the following algorithm:
   \begin{itemize}
        \item Compute $f(I,k)$. If $f(I,k)=\emptyset$ then stop.
        \item Calculate $\gamma(I,k)$. 
        \item If $\gamma(I,k) =  0$, apply $f_0$ to enumerate $Q(I,k)$.
        \item Otherwise, recursively apply this algorithm to each $(I',k') \in f(I,k)$  depth first.
    \end{itemize}
 This algorithm  partitions the solution space in each iteration and correctly enumerates all solutions.
 When traversing the search tree depth first we reach an instance having a solution and such that $\gamma(I,k) =0$   after at most $h(k)p(|(I,k)|)$ iterations of $f$ and $\gamma$.  Hence the number of steps in between the output of two successive solutions  is bounded by $2h(k)p(|(I,k)|)$. 
    Since the functions $f$ and $\gamma$ are  $\fptt$ and since $f_0$ has $\fptd$, we finally get $\fptd$.
\end{proof}

We now discuss in detail, how the 
\clst enumeration problem fits Definition \ref{def:flash_partition}.
To this end, we consider the 
\clstp problem
as a generalized version of the original  problem, 
where, we are additionally given a 
string $\omega \in \Sigma^*$  
and we want to find all center strings that contain $\omega$ as prefix.
Clearly, $\clst$ is the special case of \clst \textsc{with Prefix}
where the prefix $\omega$ is the empty string.

For \clstp, 
$f$ is the method by which we generate all $((X,\omega \cdot \sigma),k)$ and then check if a solution exists, only keeping those instances with solutions. As $\gamma$ bounds the branching depth, we need $\gamma((X,\omega),k) = L -w$ where $L$ is the length of the strings and $w$ is the length of $\omega$, as each step increases the length of the prefix by $1$. Note that $k$ never changes, thus $k' \leq k$ for all child instances, and we can ensure $|(I',k')| \leq |(I,k)|$ by filling the prefix with a dummy character for each missing letter. \\

\section{More Details on Section \ref{subsec:SolutionSearch}}
\label{app:SolutionSearch}

We provide a proof sketch of Proposition~\ref{prop:solution_partition}.

\begin{proof}[Proof sketch of Proposition~\ref{prop:solution_partition}]
The solution search partition algorithm $(f,g)$  defines a tree that implicitly contains all solutions exactly once. 
The root of the tree is the original instance $(I,k)$ and it gives rise to a first solution using the function $g$  in $\fptt$. The children of each instance in this tree are computed by the function $f$, which is also $\fpt$. 
Moreover, the size of any instance explored through our algorithm is bounded by the size of the original instance.  
Thus, in order to enumerate all solutions, it is sufficient to traverse the tree by a depth-first search. Since both $f$ and $g$ are  $\fpt$,  the cost of each iteration is $\fpt$. 
\end{proof}

\section{More Details on Section \ref{sec:UnionEnumeration}}
\label{app:UnionEnumeration}

We first provide a proof sketch of Proposition~\ref{prop:UnionEnumeration}

\begin{proof}[Proof of Proposition~\ref{prop:UnionEnumeration}]
    Given an instance $(I,k)$ of $Q$, we enumerate the solutions $Q(I,k)$ with $\fptd$ in the following way. 
    First, we compute in $\fptt$ the whole set $\{\gamma_1,\dots, \gamma_l\} = c(I,k)$ of size $O(h(k)p(|(I,k)|))$.
    Then, we run $f(I,k,\gamma_1)$ until it finds its next solution $S_1$.
    At that point, we pause $f(I,k,\gamma_1)$ and check whether one of $f(I,k,\gamma_2), \dots, f(I,k,\gamma_l)$ will at some point output $S_1$ by running $g(I,k,\gamma_2,S_1),\dots, g(I,k,\gamma_l,S_1)$.
    If this is the case, we simply run $f(I,k,\gamma_2)$ until it outputs its next solution $S_2$, which is possibly different from   $S_1$.
    (If $f(I,k,\gamma_2)$ has already been processed fully, we simply proceed to $f(I,k,\gamma_3)$.)
    At that point, we also pause $f(I,k,\gamma_2)$ and check whether one of $f(I,k,\gamma_3),\dots,f(I,k,\gamma_l)$ will at some point output $S_2$ by running $g(I,k,\gamma_3,S_2), \allowbreak \dots, g(I,k,\gamma_l,S_2)$ and so on.
    That is, we proceed through $f(I,k,\gamma_1), \dots, f(I,k,\gamma_l)$ in this manner until we find a $f(I,k,\gamma_i)$ whose next solution $S_i$ will not be output by any of $f(I,k,\gamma_{i+1}), \dots, f(I,k,\gamma_l)$.
    Now we can output $S_i$ and \textit{not} output $S_1,\dots, S_{i-1}$. 
    Note that the solutions $S_1,\dots, S_{i-1}$ will be output later using different identifiers. 
    (Indeed, the very reason why we have iterated through solutions up to $S_i$ is 
    that each $S_r$ with $r < i$  is also contained 
    in $f(I,k,\gamma_s)$ for some $s$ with $r < s \leq i$.) 
    After this, we continue running $f(I,k,\gamma_1)$ again in the same manner.
    
    We repeat this procedure until $f(I,k,\gamma_1)$ has no next solution.
    At that point, we proceed with $f(I,k,\gamma_2)$ in the same manner.

    First note that this procedure outputs each solution exactly once, as for each $f(I,k,\gamma_i)$, we output the solutions $f(I,k,\gamma_i)\setminus \bigcup_{j>i} f(I,k,\gamma_j)$ and $Q(I,k) = \bigcup_i(f(I,k,\gamma_i)\setminus \bigcup_{j>i} f(I,k,\gamma_j))$.

    As far as the delay is concerned, note that, in the worst case, we have to wait for each $f(I,k,\gamma_i)$ to produce its next solution and evaluate each $g(I,k,\gamma_i,S)$ for 
    at most $l-1$ different solutions $S$.
    Thus, the delay is in $O((h(k)p(|(I,k)|))^3)$.
\end{proof}

We now discuss in detail, how the 
$\lpth$ enumeration problem fits 
Definition~\ref{def:union}.
In the $\lpth$ problem, we are given a graph $G = (V,E)$ with $V=\{v_1,\dots,v_n\}$ and an integer $k$.
First recall that it is possible to compute a perfect hash family $\Gamma$ using $k$ colors of size $e^k k^{O(log k)} \log n$ in time $e^k k^{O(log k)} n \log n$, i.e., in $\fpt$-time \cite{DBLP:conf/focs/NaorSS95}.
Thus, an algorithm that does so can be used as the function $c$, i.e., $c(G,k)=\Gamma$.

Then, next, we need to assign subsets of the solutions to the different identifiers $\Gamma$ via the function $f$.
Naturally, for $\gamma\in \Gamma$ let $f(G,k,\gamma)$ be equal to the $\{1,\dots,k\}$-colorful paths $\Pi_\gamma$ for $\gamma$.
Crucially, since $\Gamma$ is a perfect hash family, we have $\bigcup_{\gamma \in \Gamma}\Pi_\gamma = \Pi,$ where $\Pi$ are all the paths of length $k$ in $G$.
Furthermore, we have seen that it is possible to enumerate all sets $\Pi_\gamma$ with $\fptd$.

Lastly, the function $g$ simply is defined as
$$g(G,k,\gamma,\pi) = \begin{cases}
    1 & \pi \text{ is } \{1,\dots,k\}\text{-colorful for } \gamma,\\
    0 & \text{otherwise},
\end{cases}$$
where $\pi$ is a candidate path.

\section{More Details on Section \ref{sec:IterativeCompression}}

We first provide a proof sketch of Proposition~\ref{prop:IterativeComp}.

\begin{proof}[Proof sketch of Proposition~\ref{prop:IterativeComp}]
    For $I = \sigma_1 \cdots \sigma_{|I|}$, let $I_i = \sigma_1 \cdots \sigma_i$ (we define $I_0 : = \varepsilon$).
    Following the algorithm already sketched above, we do the following:
    We start with a trivial solution $S_0$ of the empty subproblem $(I_0,k)$.
    Then we repeatedly apply $g$ to $(I_i,k)$ and a solution $S_i$ to find a solution $S'_{i+1}$ of $(I_{i+1},k+1)$, and then run $c$ until it gives its first solution $S_{i+1}$ for $(I_{i+1},k)$.
    If any $(I_{i+1},k)$ has no solution, we can assert that $(I,k)$ has no solution either. 
    Otherwise, once $I_{i+1} = I$, run $c$ until it has output all solutions.

    By our assumption, $c$ has $\fptd$, so we only need to bound the preprocessing time. 
    For that, we execute $g$ and $c$ (until it gives its first solution) at most $|I|$ times. 
    Since $g$ is an $\fptt$ function and $c$ finds the first solution in $\fpt$ time, 
    we thus achieve $\fptd$.
\end{proof}

We now discuss in detail, how the 
iterative compression algorithm in the proof of 
Proposition \ref{prop:VcIterativeCompression} for the  $\vc$ problem
fits Definition~\ref{def:comp}.
For a graph $G=(\{v_1,\dots,v_n\},E)$, let $G_i$ denote the graph induced by the vertex set $\{v_1, \dots, v_i\}$ and for $U \subset V$, 
let $N_{G_i}(U)$ be the union of the neighbors 
of vertices $U$ in $G_i$, excluding $U$ itself. 
Trivially, $\emptyset$ is a vertex cover of $G_0 = (\emptyset, \emptyset)$. 
Then, for a vertex cover $S_i$ of $G_i$ of size $\leq k$, $g$ constructs the vertex cover $S'_{i+1} = S_i \cup \{v_{i+1}\}$ of $G_{i+1}$ of size $\leq k+1$.\footnote{Formally, Definition \ref{def:comp} requires us to grow the instance one character $\sigma\in \Sigma$ at a time, whereas we add one vertex at a time. Of course, by appropriate interpreting the encodings of the instances of $\vc$ one can easily define $g$ to handle one character at a time. However, this would unnecessarily obfuscate the crucial parts and we therefore decided against presenting it in this manner.}

For the compression step, we observe that we can partition the set of 
all vertex covers of $G_{i+1}$ according to their intersection with $S'_{i+1}$.
Conversely, given a  vertex cover $S$  of $G_{i+1}$, we can split $S'_{i+1}$ into subsets 
$C, F \subseteq S'_{i+1}$ with $S'_{i+1} \cap S = C$ and $S'_{i+1} \setminus  S = F$.
Clearly, if a vertex cover $S$ of $G_{i+1}$ does not contain any of the vertices in $F$, then it must contain all of $N_{G_{i+1}}(F)$.
Hence, to compress $S'_{i+1}$, $c$ iterates over all partitionings $S'_{i+1} = C \cup F$.
For each partition, $c$ checks whether $S_{i+1} = C \cup N_{G_{i+1}}(F)$ has size at most $k$ and is a vertex cover (i.e., if $F$ is an independent set in $G_{i+1}$).

As long as $i+1 < n$, we use $c$ only to find {\em the first} vertex cover $S_{i+1}$ of size $\leq k$ of 
$G_{i+1}$, and then use  $S_{i+1}$ to proceed with $G_{i+2}$.
In contrast, for $i+1 = n$, we use $c$ to 
output {\em all} vertex covers $S_{i+1}$ of size $\leq k$ of $G_{i+1} = G_n$.
In addition, for all such $S_{i+1}$ whose size is strictly smaller than $k$, 
$c$ also has to output every $\hat{S}_{i+1} \supseteq S_{i+1}$ where $ \hat{S}_{i+1}\setminus S_{i+1} \subseteq V\setminus F$ and $|\hat{S}_{i+1}|\leq k$. We note that, for every such 
$S_{i+1}$, $c$ can construct (and output) all vertex covers $\hat{S}_{i+1}$ 
by adding any subset of vertices from $V\setminus F$ as long as $|\hat{S}_{i+1}| \leq k$  holds.
By only adding vertices from $V\setminus F$, it is guaranteed that 
$c$ will never produce duplicates. Moreover, it is easy to see that this enumeration 
can be done with $\fptd$.

\section{More Details on Section \ref{sec:DynamicProgramming}}
\label{app:DynamicProgramming}

Recall from Section~\ref{sec:DynamicProgramming} that 
we have defined several modifications that are required in 
an enumeration algorithm for the {\sc Steiner Tree} problem
compared with the decision algorithm, namely: 
\begin{enumerate}
\item[(1)] we 
(arbitrarily) fix a strict order on the terminals in $K$,
\item[(2)] we extend entries in table $T$ to $T[D,v,b]$, where 
$b$ is a bit indicating if, in the Steiner trees of $D \cup \{v\}$ represented by $T[D,v,b]$, 
$v$ has degree 1 (indicated by  $b=1$) or degree $> 1$ (indicated by $b = 0$), 
and 
\item[(3)] we impose the restriction that the recursion in Equation (\ref{eq:steinerRecursion}) may only be applied if $D'$ contains the smallest 
terminal in $D$ (according to the chosen order) and the left operand 
is of the form $T[D',u,1]$, i.e., $u$ has degree 1 in all Steiner trees thus represented.
\end{enumerate}
Analogously, we adapt the definition of $\steiner[D,v,b]$.

It remains to show that these measures indeed make the provenance unique.
Let $S$ be an arbitrary Steiner tree in $\steiner[D,v,1]$ seen as rooted in $v$.
Thus, $v$ has degree $1$ and we can follow the descendants of $v$ until we find the first node $u\in V$ of degree $ \geq 2$ in $S$ (if there is no such node, then  $|D|=1$).
Then, there is one child $c$ of $u$ whose subtree contains the smallest terminal $d_{\min}$ in $D$.
Let $S_c, S_u$ be subtrees rooted in $u$ where $S_c$ contains $c$ and its descendants, and $S_u$ the remainder.
We denote the terminals (leaves) of $S_c$ by $D'$ and the remaining terminals by $D\setminus D'$.
Then $S_c \in \steiner [D',u,1]$ while $S_u\in \steiner[D\setminus D',u,0] \cup \steiner[D\setminus D',u,1]$.
Furthermore, $S$ is composed of $S_c,S_u$, and a minimal path from $u$ to $v$.
For $S\in \steiner[D,v,0]$, the only difference is that $u=v$.

We note that all of these choices are unique, and thus, the provenance is unique.
We could not have chosen a $u$ further down as then some edges would be counted twice in the recursion, i.e., once in a $T[\cdot, u ]$ part and once in $\dist(u,v)$.
Furthermore, as $D'$ must contain the smallest terminal and $u$ has to have degree 1 in the left sub-Steiner tree, also $D'$ cannot be chosen differently.

Towards an $\fptd$-enumeration algorithm for the {\sc Steiner Tree} problem, the only obstacle remaining 
is to discuss how to recurse through the backlinks.
To that end, we start by picking an arbitrary terminal $k\in K$ and its unique neighbor $v_k\in V\setminus K$.
All Steiner trees have to go through $v_k$ and are, thus, exactly $\steiner[K,v_k,0]$.
Then, for arbitrary terminals $D\subseteq K$, vertices $v\in V\setminus K$, and $b\in \{0,1\}$, to enumerate the Steiner trees $\steiner[D,v,b]$ we recursively iterate through all $u$ that were used in the recursion for $T[D,v,b]$.
This gives us a split $D = D' \cup D\setminus D'$ and in 3 nested loops we simply iterate through $S'\in \steiner[D',u,1]$, $S''\in \steiner[D\setminus D',u,0] \cup \steiner[K\setminus D',u,1]$, and shortest paths $P$ between $u$ and $v$.
Each combination of $S',S''$, and $P$ constitutes a unique Steiner tree. 

\fi

\end{document}